\documentclass{article}
\PassOptionsToPackage{table}{xcolor}
\usepackage{arxiv}
\usepackage[T1]{fontenc}
\usepackage[round,authoryear]{natbib}
\usepackage{amsmath,amsfonts,bm}

\def\eqref#1{equation~\ref{#1}}
\def\1{\bm{1}}

\DeclareMathAlphabet{\mathsfit}{\encodingdefault}{\sfdefault}{m}{sl}
\SetMathAlphabet{\mathsfit}{bold}{\encodingdefault}{\sfdefault}{bx}{n}

\usepackage{hyperref}
\usepackage{url}
\usepackage{amsthm}
\usepackage{amssymb}

\usepackage[table]{xcolor}
\usepackage{algorithm}
\usepackage{algpseudocode}
\usepackage{soul}
\usepackage{svg}
\usepackage{graphicx}
\usepackage{mwe}  % provides example-image / example-image-a placeholders
\usepackage{filecontents}
\usepackage{diagbox}

\usepackage{tikz}
\usepackage{amsmath}

\usepackage{filecontents}

\usepackage{booktabs}    % For professional-looking tables
\usepackage{multirow}    % For merging rows
\usepackage{caption}
\usepackage{subcaption}
\usepackage{tabularx}

\usepackage{xurl} % This fixed the broken links
\usepackage{amssymb}

\usepackage{colortbl}
\usepackage{array}
\usepackage{threeparttable}
\usepackage{gensymb}
\definecolor{Gray}{gray}{0.9}
\definecolor{LightGreen}{RGB}{230,255,230}
\usepackage{xspace}
\usepackage{siunitx}
\usepackage{placeins}

\usepackage{tcolorbox}
\tcbuselibrary{breakable,skins}  % Added 'skins' library

\usepackage{hyperref}
\usepackage{makecell}

\usepackage{enumitem}
\setlist[itemize]{nosep, itemsep=1pt, topsep=1pt, leftmargin=1em, labelwidth=*}
\setlist[enumerate]{nosep, itemsep=1pt, topsep=1pt, leftmargin=1em, labelwidth=*}

\newcommand{\myparagraph}[1]{\smallskip\noindent\textbf{#1}\xspace}

\newcommand{\defensename}{\texttt{SABER}\xspace}

\usepackage[textsize=tiny,colorinlistoftodos,textwidth=15mm]{todonotes}

\newtheorem{theorem}{Theorem}
\newtheorem{proposition}[theorem]{Proposition}

\title{Don't Let One Lie Survive A Hundred Truths: A Selective Bayesian Trust Estimator for Collaborative Perception}
\author{
  Yutong Liu \quad Chenyi Wang \quad Ming F. Li \quad Qingzhao Zhang \\
  School of Electrical, Computing \& Software Engineering \\
  The University of Arizona, Tucson, AZ \\
  \texttt{\{yutongl, chenyw, lim, qzzhang\}@arizona.edu}
}
\date{}
\renewcommand{\headeright}{}
\renewcommand{\undertitle}{}
\renewcommand{\shorttitle}{SABER: Selective Bayesian Trust Estimation}

\begin{document}
\maketitle

\begin{abstract}
Collaborative perception (CP) enables connected vehicles to see beyond their own sensors but makes them dependent on messages they cannot independently verify. A compromised collaborator can surgically conceal a single safety-critical object or inject a non-existing one while correctly reporting many others. Existing Bayesian trust mechanisms pool agreement across objects, which, while effective against blatant untargeted attacks, either incurs high false-positive rates (FPR), or allows unrelated correct reports to dilute persistent attack evidence for stealthy single-object attackers.
To address this problem, we propose \defensename, a selective \textit{two-tier} Bayesian trust estimator. The first tier maintains broad agent and object trust, preserving the ability to downweight benign but low-quality contributors. Cumulative-sum screening selects agent--object pairs with persistent omissions or unsupported reports for focused Bayesian assessment. The second tier checks these pairs against other agents' evidence and maintains a separate, reference-weighted Beta state for each. The lowest pair score constrains agent trust, preventing unrelated reports from diluting a targeted attack.
We establish sufficient conditions for stronger attacker-side trust reductions with bounded additional benign false alarms at fixed thresholds. Compared with state-of-the-art CP defenses, \defensename improves attack detection while reducing benign FPRs. On OPV2V, \defensename improves defense ROC-AUC over MATE by up to 0.427 in late fusion and 0.337 in intermediate fusion. Against advanced intermediate-fusion data fabrication attacks, it increases detection rates over ROBOSAC and LUCIA by up to 96.40 and 67.07 percentage points, respectively, while reducing FPRs.
\end{abstract}

\section{Introduction}
\label{sec:introduction}

Collaborative perception (CP) helps connected and autonomous vehicles (CAVs) overcome sensing-range and occlusion limits by sharing measurements, features, or detected objects \citep{xu2022opv2v}. F-Cooper and V2VNet fuse complementary features \citep{chen2019f,wang2020v2vnet}, while subsequent methods refine information selection and fusion through learned collaboration graphs, attention, and confidence-guided communication \citep{li2021learning,xu2022v2x,hu2022where2comm}.

Shared information also creates a security risk: compromised collaborators can manipulate intermediate features or detected objects to hide real objects or introduce nonexistent ones \citep{zhang2024data}. Even one corrupted object can threaten safety: omitting one approaching vehicle can compromise an unprotected left turn even when other reports are correct \citep{Coopernaut}, and targeted attacks have been shown to affect braking and collision outcomes \citep{zhang2024data}.

Existing defenses identify suspicious collaborators through inconsistencies in shared perception. ROBOSAC checks collaborator subsets against an individual reference \citep{li2023among}, while MADE compares individual and collaborative predictions \citep{zhao2023malicious}. LUCIA and CP-Guard+ assess feature inconsistencies \citep{wang2025threat,hu2025cp}, and GCP examines spatial discrepancies and temporal motion anomalies \citep{tao2026gcp}. Bayesian trust estimation accumulates such evidence over time. MATE recursively updates agent and track trust from reported and expected observations to downweight low-quality contributors \citep{hallyburton2025security}.

However, pooling evidence across objects into one agent trust state allows unrelated correct reports to offset evidence of a targeted attack. This \emph{cross-object evidence dilution} lets an agent retain high trust while corrupting one or a few safety-critical objects. Increasing pooled-update sensitivity also penalizes benign errors more heavily, risking excessive false positives.

Object-level temporal structure offers a way to distinguish these attacks from ordinary benign errors. Sustaining a false scene description across frames can prolong its influence on downstream driving decisions \citep{zhang2024data,hallyburton2025security}. We therefore study defenses against persistent attacks on one or a few objects. Our goal is to preserve evidence of these persistent contradictions in trust estimation while tolerating ordinary benign errors.

Based on the insights, we introduce \defensename, \emph{Selective Agent--object Bayesian Estimation of Reliability}. \defensename includes a two-tier trust estimation, where the first maintains broad Bayesian estimates of each agent's reporting reliability and each tracked object's existence, using observations across objects and time. The second tier focuses on \emph{agent--track pairs}, where each represents one agent's reporting history about an object hypothesis tracked across frames. A cumulative-sum (CUSUM) statistic identifies pairs with persistent omissions or unmatched reports \citep{page1954continuous}. These pairs receive further Bayesian assessment rather than being labeled malicious by the screening statistic alone. Other agents' observations provide an object-existence reference, with the evaluated agent excluded from both current evidence and historical reference-trust updates.

Each selected agent--track pair receives a separate Bayesian trust state, represented by a Beta distribution \citep{josang2002beta}. Evidence about one track therefore cannot directly cancel evidence about another. Contradictions receive greater weight when other agents provide strong evidence against the assessed report or omission. Recent contradictions observed during screening are retained so that trust can respond promptly when the pair enters the second tier, without waiting for new evidence. The lowest assessed pair score constrains the agent's overall trust, preserving evidence of a localized attack even when most reports remain correct.

Our contributions are:
\begin{itemize}
\item \textbf{Identifying cross-object evidence dilution.} We identify a structural limitation of pooled agent-level trust estimation: correct reports about unrelated objects can mask persistent errors on a safety-critical object. Our analysis shows how the influence of a localized attack depends on the volume of unrelated evidence, rather than only on the strength of the contradiction. Because both accumulate over time, longer observation alone need not resolve the problem.
\item \textbf{Object-isolated Bayesian trust estimation.} We develop \defensename, a two-tier estimator that separates overall agent reliability from reliability on individual objects. Separate evidence states for agent--track pairs, combined through lowest-pair aggregation, prevent correct reports about unrelated objects from directly canceling a localized contradiction.
\item \textbf{Evaluation across fusion settings and attack durations.} We extensively evaluate \defensename on OPV2V benchmark~\citep{xu2022opv2v} under state-of-the-art (SOTA) object removal and spoofing attacks~\citep{zhang2024data,wang2025threat} in both late and intermediate fusion. We compare against SOTA defenses MATE~\citep{hallyburton2025security}, ROBOSAC~\citep{li2023among}, and LUCIA~\citep{wang2025threat}. Relative to MATE, \defensename improves ROC-AUC by up to 0.427 in late fusion and 0.337 in intermediate fusion. At the evaluated fixed thresholds, \defensename's detection of intermediate-fusion Data Fabrication (DF) spoofing attacks~\citep{zhang2024data} exceeds ROBOSAC and LUCIA by up to 96.40 and 67.07 percentage points, respectively, while incurring lower false-positive rates.
\end{itemize}

\section{Related Work}
\label{sec:related_work}

\myparagraph{Collaborative perception (CP).} CP expands sensing coverage by combining observations across vehicles through early, intermediate, or late fusion of raw sensor data, learned features, or object reports, respectively. Representative feature-fusion methods include Attentive Fusion in OPV2V \citep{xu2022opv2v}, DiscoNet's distillation-based collaboration \citep{li2021learning}, V2X-ViT's cross-agent attention \citep{xu2022v2x}, and Where2comm's confidence-guided communication \citep{hu2022where2comm}. Object-level sharing is supported by sensor-sharing standards, including SAE J3224's Sensor Data Sharing Message (SDSM) and ETSI's Collective Perception Message (CPM) \citep{SAEJ3224_2022,ETSI_TS103324_2023}. \defensename assesses object-report reliability in late fusion and extends this assessment to intermediate fusion by independently decoding each sender's received features.

\myparagraph{Attacks on shared perception.} \cite{tu2021adversarial} first explored untargeted attacks against CP to degrade accuracy. Subsequently, Data Fabrication studies targeted attacks through feature manipulations for spoofing and removal \citep{zhang2024data} of single objects. SOMBRA exploits attentive fusion for stealthy targeted and mass removal attacks \citep{wang2025threat}. These attacks motivate assessing sparse object-level inconsistencies induced by shared messages.

\myparagraph{Defenses and trust estimation.} ROBOSAC checks randomly sampled collaborator groups against a trusted receiver's individual detections \citep{li2023among}. MADE tests ego--collaborator consistency with conformal p-values and multiple testing \citep{zhao2023malicious}, while CAD checks fabrication through multi-vehicle observations \citep{zhang2024data}. LUCIA continuously reweights attention using feature-disagreement scores \citep{wang2025threat}, CP-Guard+ separates benign and malicious features \citep{hu2025cp}, and GCP combines spatial consistency with temporal motion-flow analysis \citep{tao2026gcp}. MATE recursively estimates agent and track trust for security-aware fusion \citep{hallyburton2025security}. \defensename preserves a separate evidence state for each assessed pair, preventing unrelated correct reports from directly replenishing its trust.

\myparagraph{Bayesian evidence and sequential screening.} Beta reputation models represent feedback through evidence counts with weighting and forgetting \citep{josang2002beta}. CUSUM accumulates evidence of sustained changes \citep{page1954continuous}, while generalized Bayes provides a loss-based interpretation of weighted updates \citep{bissiri2016general}. \defensename combines these principles by screening agent--track histories, assessing selected pairs against other agents' evidence, and retaining separate weighted states. The contribution lies in this selective, object-specific assessment rather than Bayesian updating or temporal screening alone.

\section{Problem Formulation}
\label{sec:problem_formulation}

\myparagraph{System model.} Let $j$ index an object and $t$ a frame. For each timestep, the ego agent as the trusted aggregator (AGG), receives messages from collaborating agents $i\in\mathcal I$, and produces a fused perception representation for downstream tasks. The AGG transforms object reports into a common coordinate system and associates reports of the same object across agents and frames into a track. Each message contains the collaborator pose $p_i^t$, declared fields of view (FOVs) $\widehat V_{ij}^t$, and either (1) object reports for late fusion, or (2) bird's-eye-view (BEV) features for intermediate fusion. 

While each message has an identifiable sender, no collaborating agent is assumed to be a trusted reference. If agent $i$ reports a box matched to track $j$, AGG treats it as evidence that the object exists. If the received message contains no matching box, AGG treats it as evidence against the object's existence only when agent $i$ declares the object's location visible. An unavailable message or an unmatched track outside the declared FOV provides no such negative evidence.

\myparagraph{Threat model.} One compromised agent (\textit{i.e.}, the attacker) aims to induce unsafe driving of a victim collaborator by suppressing real objects (\textit{i.e.}, removal attack) or reporting nonexistent ones (\textit{i.e.}, spoofing attack) in CP over frames $t_s,\ldots,t_e$. It targets one or a few objects by modifying object reports in late fusion or perturbing feature messages in intermediate fusion, and may falsify its visibility declarations. It cannot modify other agents' messages or AGG. We exclude denial-of-collaboration, where an agent withholds observations by falsely declaring an object outside its FOV. Such behavior reduces the information available for collaboration rather than supplying a false observation within the declared FOV, and falls outside our threat model. The defense knows the fusion setting but not the attacker's identity or attack interval. A target can be assessed even if the attacker never reports it, provided other agents supply relevant observations.

\myparagraph{Benign faults vs. attacks.} Safety-oriented attacks can differ from ordinary benign perception errors in two key respects. \emph{Spatially}, attacks target nearby, decision-critical objects, whereas benign errors are more prevalent farther from the observing vehicle. Our single-agent measurements with the intermediate-fusion detector on OPV2V test scenarios show that most erroneous boxes lie beyond 50\,m (Figure~\ref{subfig:benign_distance}). \emph{Temporally}, benign errors are predominantly short-lived (Figure~\ref{subfig:benign_duration}), whereas attacks can sustain errors across frames to affect tracking and driving decisions. For example, \citet{jia2019fooling} analyze a tracker operating at 30 Hz, under which deleting an established track through timeout requires 60 consecutive missed detections. In a CARLA unprotected-left-turn scenario (Figure~\ref{fig:safety_scenario}), continuous removal of an oncoming vehicle from the victim's obstacle input reduces time to collision (TTC) to zero and results in a collision (Figure~\ref{subfig:continuous_attack_ttc}). This illustrates how persistent perception errors can cause serious safety consequences without corrupting many objects. A five-frame example with benign errors and designated attack positions is provided in Figure~\ref{fig:case_demo}. These spatial and temporal differences motivate our filters for selecting suspicious agent-object pairs.

The distance distribution counts each erroneous box in each frame once, whereas the duration distribution counts each uninterrupted error run once. Both pool false positives and false negatives across all 3--5-agent OPV2V test scenarios. The TTC curve follows successive instants of a single continuous removal rollout.

\begin{figure}[t]
    \centering
    \captionsetup[subfigure]{font=small,skip=3pt}
    \begin{subfigure}[b]{0.32\linewidth}
        \centering
        \includegraphics[width=\linewidth]{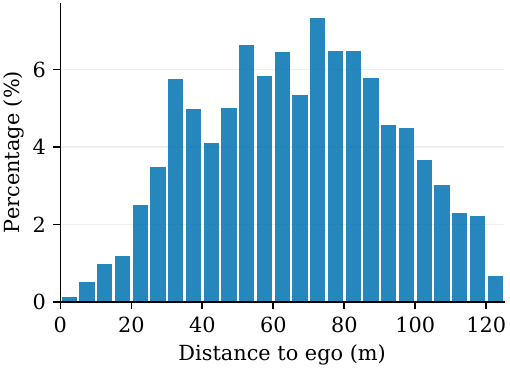}
        \caption{Benign error distance to ego agent.}
        \label{subfig:benign_distance}
    \end{subfigure}\hfill
    \begin{subfigure}[b]{0.32\linewidth}
        \centering
        \includegraphics[width=\linewidth]{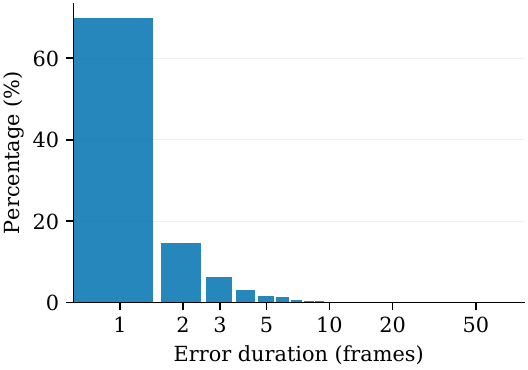}
        \caption{Benign error duration.}
        \label{subfig:benign_duration}
    \end{subfigure}\hfill
    \begin{subfigure}[b]{0.32\linewidth}
        \centering
        \includegraphics[width=\linewidth]{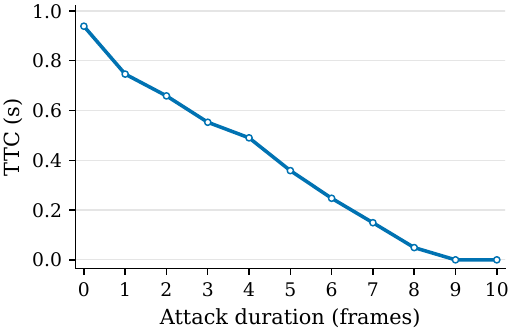}
        \caption{TTC under continuous removal attacks.}
        \label{subfig:continuous_attack_ttc}
    \end{subfigure}
    \caption{Statistics of benign errors and continuous attack.}
    \label{fig:benign_attack_motivation}
\end{figure}

\begin{figure}[t]
    \centering
    \includegraphics[width=0.40\linewidth]{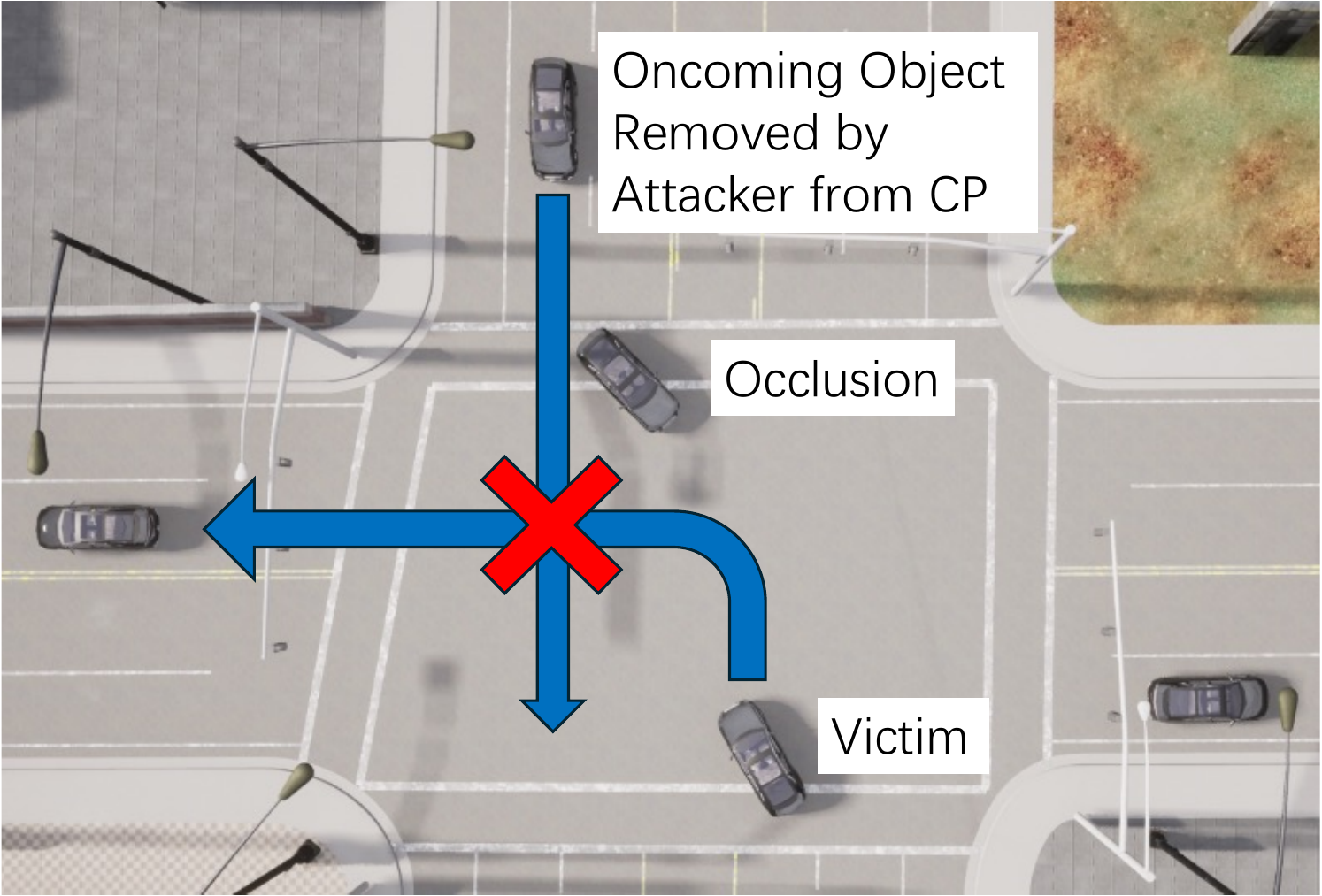}
    \caption{CARLA unprotected-left-turn scenario. Occlusion limits the victim's view of the oncoming vehicle targeted by the removal attack. Blue arrows indicate vehicle paths, and the red cross marks their conflict point.}
    \label{fig:safety_scenario}
\end{figure}

\begin{figure}[t]
    \centering
    \includegraphics[width=\linewidth]{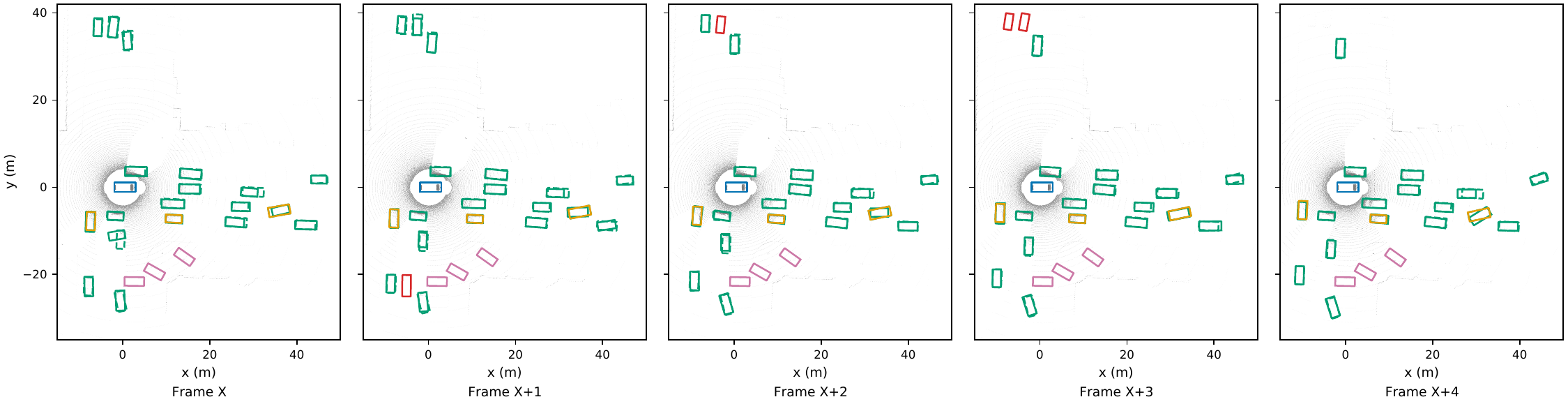}
    \caption{Five consecutive single-agent views. Blue boxes are ego agents. Green dashed boxes denote ground truth and green solid boxes denote matched detections. Red boxes mark unmatched FP/FN. Orange boxes mark designated removal targets. Purple boxes mark designated spoofing positions.}
    \label{fig:case_demo}
\end{figure}

\myparagraph{Objective of trust estimation.} Let $R_i^t\in[0,1]$ denote agent trust, with higher values indicating greater reliability. At the attack endpoint $t_\star=t_e$, define the true-positive rate (TPR) and false-positive rate (FPR) by
\begin{equation}
 \mathrm{TPR}=\Pr(R_a^{t_\star}<\eta),\qquad
 \mathrm{FPR}=\Pr_b(R_b^{t_\star}<\eta),
 \label{eq:detection_objective}
\end{equation}
where $a$ is the attacker, $b$ is a benign peer in the attacked scene, and $\eta$ is the fixed trust threshold. We aim to improve the detection--false-alarm trade-off over broad agent-level trust assessment.

\section{Methodology}
\label{sec:methodology}

\defensename uses two-tier Bayesian assessment to preserve localized attack evidence alongside overall agent reliability (Figure~\ref{fig:overview}). Tier 1 estimates broad agent trust $T_{1,i}$ across tracks using MATE's Bayesian estimator \citep{hallyburton2025security}. Tier 2 screens for persistent reporting anomalies involving nearby tracks and maintains a separate Bayesian trust state for each selected \emph{agent--track pair} $(i,j)$, representing agent $i$'s reporting history about track $j$. Each selected pair is evaluated using other agents' observations, with stronger peer support giving contradictory reports or omissions greater weight. When assessment begins, cached contradictions from the screening period are also used to update the pair's trust. The lowest assessed pair score constrains the final agent trust, preventing correct reports about other tracks from masking a localized contradiction. A pair leaves tier 2 when the agent's reporting about that track consistently agrees with other agents' observations.

\begin{figure}[t]
\centering
\includegraphics[width=\linewidth]{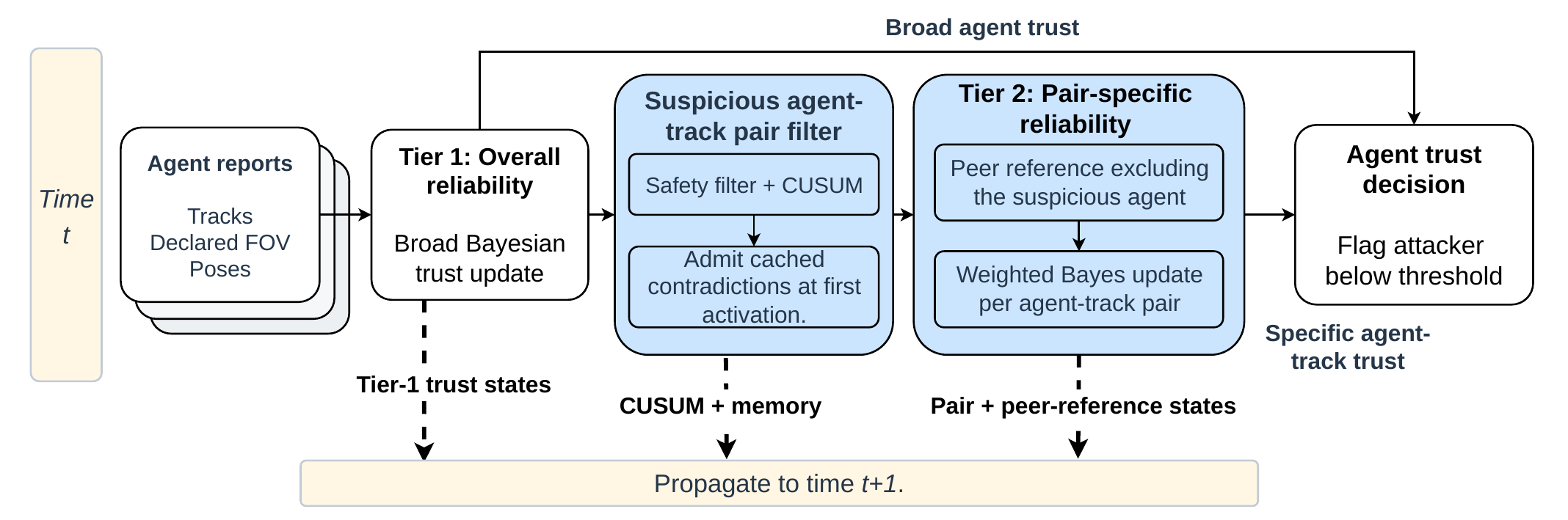}
\caption{Overview of \defensename.}
\label{fig:overview}
\end{figure}

\subsection{Spatial and Temporal Screening}
\label{subsec:screening}

\myparagraph{Spatial screening.} Objects far from every collaborating agent are harder to perceive accurately and are less likely to pose an immediate safety risk. We therefore restrict new tier-2 evidence to objects within $r_{\mathrm{safe}}$ of at least one agent to reduce false alarms caused by uncertain distant observations.\looseness=-1

\myparagraph{Temporal screening.} Track association links reports about the same object across frames, allowing CUSUM to accumulate repeated omissions or unmatched reports for each agent--track pair. Let $d_{ij}^t=1$ when agent $i$ reports a box associated with track $j$, and zero otherwise. For reference tracks, the false-negative (FN) screen treats a missing report as an event, $e_{ij}^t=1-d_{ij}^t$, only when the sender's message is available and the location is declared visible. Matched reports are non-events. For candidate tracks formed from unmatched boxes, the false-positive (FP) screen uses $e_{ij}^t=d_{ij}^t$ while the candidate remains unmatched to the reference track set.

A cumulative-sum (CUSUM) statistic detects events occurring more frequently than in benign histories \citep{page1954continuous,xie2021sequential}:
\begin{equation}
 \ell_{ij}^t=e_{ij}^t\log\frac{p_1}{p_0}+(1-e_{ij}^t)\log\frac{1-p_1}{1-p_0},\quad S_{ij}^t=\max\{0,S_{ij}^{t-1}+\ell_{ij}^t\}.
 \label{eq:cusum}
\end{equation}
Here $p_0$ is the smoothed event frequency estimated from eligible clean observations for each screen, and $p_1$ specifies the higher frequency to detect, with $0<p_0<p_1<1$. Starting from zero, $S_{ij}^t$ rises on events and falls toward zero on non-events, retaining evidence across occasional interruptions. A pair enters tier 2 when $S_{ij}^t\ge h$. To calibrate $h$, we take the maximum CUSUM over all relevant pairs and times in each clean agent--window and choose $h>0$ so that no more than a prespecified fraction of these windows triggers screening. FN and FP are calibrated separately. This limits empirical benign screening activations, not final agent false alarms, because crossing $h$ triggers assessment rather than an attacker flag. Appendix~\ref{subsec:screen_details} gives the exact calibration rule.

\subsection{Pair-Isolated Bayesian Trust Estimation}
\label{subsec:focused}

\myparagraph{Reference from other agents.} Let $q_{j,-i}^t\in[0,1]$ denote trust that track $j$ represents a real object, accumulated from other agents' eligible reporting histories since the track's introduction, with values near one supporting existence and values near zero supporting nonexistence. The reference starts from a neutral Beta prior and accumulates peer reports or eligible omissions, weighted by the reporting peers' reliability. Agent $i$ is excluded throughout the histories of both the object-existence reference and the peer-reliability estimates used to weight it. This prevents its reports from reinforcing their own reference, directly or through historical trust updates. Appendix~\ref{subsec:references} gives the reference update.

\myparagraph{Report consistency and evidence weight.} We compare the report with the reference through consistency $v_{ij}^t$ and assign evidence weight $w_{ij}^t$:
\begin{equation}
 v_{ij}^t=d_{ij}^tq_{j,-i}^t+(1-d_{ij}^t)(1-q_{j,-i}^t),\qquad
 w_{ij}^t=\lambda|2q_{j,-i}^t-1|,\quad\lambda\ge1.
 \label{eq:consistency_weight}
\end{equation}
High consistency means agreement with the reference. Both omitting an object supported by other agents and reporting an object contradicted by them give low consistency, allowing removal and spoofing to use the same update rule. The weight vanishes at a neutral reference and grows symmetrically toward strong support for existence or nonexistence. The multiplier $\lambda$ amplifies this selected evidence without dividing its influence among unrelated tracks. We further explain the weight's design properties in Appendix~\ref{subsec:references}.

\myparagraph{Memory of recent inconsistencies.} During screening, a bounded memory caches eligible contradictions within a fixed frame lookback, preserving the reference and weight computed when each observation arrived. When the pair first enters tier 2, previously unused contradictions from the same track-hypothesis episode contribute to its initial update, discounted by their age. The current frame is handled by the ordinary update rather than memory, so each observation contributes once. This enables a response without waiting for new contradictions, but applies no tier-2 penalty before activation. Appendix~\ref{subsec:memory} defines the lookback and admission rule.

\myparagraph{A separate Beta state for each pair.} For a selected pair, let $\alpha_{ij}^t$ and $\beta_{ij}^t$ represent positive and negative evidence, including the prior. Its trust is the mean of the corresponding Beta distribution \citep{josang2002beta}:
\begin{equation}
 B_{ij}^t=\frac{\alpha_{ij}^t}{\alpha_{ij}^t+\beta_{ij}^t},
 \label{eq:pair_mean}
\end{equation}
where the parameters are updated as
\begin{equation}
 \begin{aligned}
 \alpha_{ij}^t=\widetilde\alpha_{ij}^t+\Delta_{ij,+}^t+M_{ij,+}^t,\quad
 \beta_{ij}^t=\widetilde\beta_{ij}^t+\Delta_{ij,-}^t+M_{ij,-}^t.
 \end{aligned}
 \label{eq:tier2_update}
\end{equation}
The tildes denote temporally discounted states. Current evidence contributes $\Delta_{ij,\pm}^t=w_{ij}^t\phi_\pm(v_{ij}^t)$ when the pair is eligible and selected, and zero otherwise. The functions $\phi_+$ and $\phi_-$ map consistency to positive and negative evidence, respectively. The terms $M_{ij,\pm}^t$ admit cached evidence once at activation and are otherwise zero. Weighting the evidence increments in this way corresponds to a generalized Bayesian update \citep{bissiri2016general}.
At each pair's first evidence admission, initialize its Beta parameters as $(\kappa\mu_i,\kappa(1-\mu_i))$, where $\kappa>0$ is the total prior evidence mass and $0<\mu_i<1$. For agent $i$, $\mu_i$ is its tier-1 trust immediately before its first nonzero tier-2 evidence admission and remains fixed within the window. If admission occurs at the first frame, use the tier-1 prior mean. Later pairs share this mean but receive separate evidence states. Thus initialization transfers reliability without importing accumulated evidence about unrelated objects. A larger $\kappa$ makes the same new evidence less influential.

Before adding evidence, propagate each retained state toward its initial parameters:
\begin{equation}
 \widetilde\alpha_{ij}^t=(1-\rho_t)\alpha_{ij}^{t-1}+\rho_t\kappa\mu_i,\qquad
 \widetilde\beta_{ij}^t=(1-\rho_t)\beta_{ij}^{t-1}+\rho_t\kappa(1-\mu_i),
 \label{eq:propagation}
\end{equation}
where $\rho_t=\min(1,\Delta t/\tau_2)$, $\Delta t$ is elapsed time, and $\tau_2>0$ controls forgetting. This discounts past evidence while retaining the initial prior. An ineligible observation adds no evidence, but the retained state still propagates. Appendix~\ref{subsec:beta} defines the consistency increments and their generalized Bayesian interpretation.

\subsection{Agent Trust and Decision}
\label{subsec:decision}

Let $\mathcal A_i^t$ contain the track indices whose initialized pairs $(i,j)$ are currently under tier-2 assessment. When this set is nonempty, the lowest pair score constrains tier-2 agent trust:
\begin{equation}
 T_{2,i}^t=\min\left\{\mu_i,\min_{j\in\mathcal A_i^t}B_{ij}^t\right\}.
 \label{eq:pair_aggregate}
\end{equation}

We combine broad and focused assessment through
\begin{equation}
 R_i^t=\begin{cases}
 \min\{T_{1,i}^t,T_{2,i}^t\},&\mathcal A_i^t\ne\varnothing,\\
 T_{1,i}^t,&\text{otherwise}.
 \end{cases}
 \label{eq:combine}
\end{equation}
This preserves the constraint imposed by any assessed pair while retaining tier 1's broad reliability estimate. Sustained corroborated recovery removes a pair from $\mathcal A_i^t$. When no pair remains, the agent returns to tier-1 assessment without the fixed initial mean continuing to cap its score.

An agent is flagged when $R_i^t<\eta$, using one fixed threshold across agents, frames, and attack types. The trust threshold decides whether to flag an agent, whereas the CUSUM threshold $h$ decides which pairs receive additional assessment.

\myparagraph{Application to intermediate fusion.}
While \defensename is designed for the standard late fusion CP consistent with SAE SDSM or ETSI CPM messages \citep{SAEJ3224_2022,ETSI_TS103324_2023} using received object reports directly, it can be applied to the intermediate fusion paradigm \citep{xu2022opv2v,xu2022v2x}. First, AGG decodes each sender's received BEV feature message $F_i^t$ separately as $\mathcal B_i^t=D(F_i^t)$ for trust estimation with the shared detection decoder $D$. AGG then associates the decoded boxes $\mathcal B_i^t$ with tracks and applies the same two-tier trust estimator.

\subsection{Theoretical Properties}
\label{subsec:theoretical_properties}

We summarize the theoretical results below, with their assumptions and proofs in Appendix~\ref{sec:theory_statements}.

\myparagraph{Protection against cross-object dilution.} Pooled assessment allows unrelated correct reports to raise trust (Proposition~\ref{prop:dilution}). Separate pair states and minimum aggregation prevent this direct cross-object cancellation:
\begin{proposition}[Object-specific evidence isolation]
\label{prop:isolation_summary}
Fix a pair's initialization, reference and report histories, eligibility, screening, evidence admission, temporal discounting, and current assessment membership. Direct evidence updates for other tracks cannot change $B_{ij}$. While the pair remains assessed, $T_{2,i}\le B_{ij}$.
\end{proposition}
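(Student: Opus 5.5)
The first claim is proved by induction on frames, showing that the pair state $(\alpha_{ij}^t,\beta_{ij}^t)$ is a deterministic function only of the inputs held fixed in the statement. The second claim follows directly from the definition of the minimum in equation~\ref{eq:pair_aggregate}.

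\emph{Isolation.} Begin by writing out the recursion from equations~\ref{eq:tier2_update} and~\ref{eq:propagation}. At the pair's first admission the state is $(\kappa\mu_i,\kappa(1-\mu_i))$. Here $\mu_i$ is a quantity recorded once, at the agent's first nonzero tier-2 admission, and then held fixed within the window. Since the initialization is fixed by hypothesis, so is $\mu_i$.

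For the inductive step, suppose $(\alpha_{ij}^{t-1},\beta_{ij}^{t-1})$ depends only on the fixed inputs. The propagated state $(\widetilde\alpha_{ij}^t,\widetilde\beta_{ij}^t)$ then depends only on that state, on $\rho_t$ (fixed temporal discounting) and on $\kappa\mu_i$. The increment $\Delta_{ij,\pm}^t=w_{ij}^t\phi_\pm(v_{ij}^t)$ uses $d_{ij}^t$ and $q_{j,-i}^t$ through equation~\ref{eq:consistency_weight}. It is gated by eligibility and selection, all of which are fixed. The memory terms $M_{ij,\pm}^t$ draw only on cached contradictions from the same track-hypothesis episode for the pair $(i,j)$, each cached with its own reference and weight. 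Their admission rule is fixed as well.

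None of these quantities takes as input $d_{ik}$, $B_{ik}$, $\alpha_{ik}$ or $\beta_{ik}$ for any $k\neq j$. Nor does any of them read the pooled tier-1 state after initialization. Hence $(\alpha_{ij}^t,\beta_{ij}^t)$, and with it $B_{ij}^t=\alpha_{ij}^t/(\alpha_{ij}^t+\beta_{ij}^t)$, is invariant to direct evidence updates for other tracks.

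\emph{Main obstacle.} The delicate step is making precise that nothing leaks in through the inputs. Two channels need checking:
\begin{itemize}
\item $q_{j,-i}$ is built from peers' histories on track $j$, weighted by peer reliabilities. It is held fixed by hypothesis, and since it excludes agent $i$, agent $i$'s reports on other tracks cannot enter it. I would note this explicitly.
\item $\mu_i$ is tier-1 trust, which is pooled. It is frozen at a single instant, however, and is part of the fixed initialization, so later cross-object evidence cannot move it.
\end{itemize}
I would phrase the result as: ``holding these inputs fixed, $B_{ij}$ is a function of them alone''. This avoids any causal claims about indirect effects.

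\emph{Bound.} While $j\in\mathcal A_i^t$, equation~\ref{eq:pair_aggregate} gives
\[
T_{2,i}^t=\min\{\mu_i,\min_{k\in\mathcal A_i^t}B_{ik}^t\}\le\min_{k\in\mathcal A_i^t}B_{ik}^t\le B_{ij}^t,
\]
because a minimum over a set containing $B_{ij}^t$ is at most $B_{ij}^t$. Since $\mathcal A_i^t\neq\varnothing$ in this case, equation~\ref{eq:combine} moreover gives $R_i^t\le T_{2,i}^t\le B_{ij}^t$. I would state this as a remark: correct reports on other tracks can raise neither $B_{ij}$ nor the cap that $B_{ij}$ places on $R_i$.
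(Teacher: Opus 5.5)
Your proposal is correct and follows the same route as the paper. The paper also unrolls the recursion in Eqs.~(\ref{eq:tier2_update}) and~(\ref{eq:propagation}) by induction, recorded as the closed form $B_{ij}=(\kappa\mu_i+\lambda U_{ij})/(\kappa+\lambda(U_{ij}+V_{ij}))$ in Eq.~(\ref{eq:unrolled}), whose sums contain only the pair's own admitted evidence, and then reads off $T_{2,i}\le B_{ij}$ and $R_i\le B_{ij}$ from the minima in Eqs.~(\ref{eq:pair_aggregate}) and~(\ref{eq:combine}); the only difference is that you keep the induction qualitative rather than writing out this explicit form, which the paper reuses in its later results.
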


\myparagraph{Trust response to persistent contradictions.} The analysis characterizes memory's effect on pair trust (Proposition~\ref{prop:memory}) and gives sufficient weighted-evidence conditions for detection by attack completion (Proposition~\ref{prop:finite_detection}). Theorem~\ref{thm:paired} establishes when the expected attack-induced decrease exceeds broad assessment alone, accounting for any additional reduction on the paired clean input.

\myparagraph{Attacker--benign separation and false alarms.} Theorem~\ref{thm:separation} and Proposition~\ref{prop:threshold_interval} give conditions for attacker--benign trust separation and a fixed threshold combining high detection probability with bounded additional benign false alarms.

\section{Evaluation}
\label{sec:evaluation}

\subsection{Experimental Setup}
\label{subsec:evaluation_settings}

\myparagraph{Data and perception settings.} We evaluate on OPV2V scenes with three to five agents \citep{xu2022opv2v}. We consider late fusion, where agents share object reports~\cite{ETSI_TS103324_2023}, and intermediate fusion sharing BEV features. Both use PointPillars \citep{lang2019pointpillars} as the encoder backbone. Late-fusion reports come from independent single-agent detection. Intermediate experiments use the Attentive Fusion algorithm~\citep{xu2022opv2v}, with each received message decoded independently for trust assessment. MATE and \defensename use a BEV IoU threshold of 0.3 for report-to-track association, and ROBOSAC uses the same threshold for detection-set matching.

\myparagraph{Attack cases.} The full evaluation dataset contains 688 cases, 341 for removal cases and 347 for spoofing. A case specifies a window, an attacker, and three eligible target objects or spoofing positions, selected before defense evaluation. Removal targets must match the attacker's cached predictions throughout the main attack interval. Spoofing positions must overlap neither real objects nor each other and satisfy visibility requirements. We attack one, two, and three targets, respectively, to study sparse errors. Late-fusion attacks remove or insert boxes directly in the shared message. Intermediate-fusion attacks use projected gradient descent (PGD)~\citep{pgd18madry} and the white-box online adversarial attack of \citet{zhang2024data}, abbreviated DF here, for removal and spoofing. SOMBRA is evaluated for removal \citep{wang2025threat} only. Appendix~\ref{subsec:case_selection} provides more details about the attack target selection strategy.

\myparagraph{Experimental protocol.} We select 30-frame windows with a ten-frame stride and reset trust histories between windows. The main comparison uses zero-based frames 0--19, with ten benign frames followed by an attack between $t_s=10$ and $t_e=19$. We measure pre-attack behavior at frame 9 and attacker detection at attack completion at frame 19. The remaining frames support the duration study in Section~\ref{subsec:sensitivity}, with full results in Appendix~\ref{subsec:attack_duration_sensitivity}. In the main comparisons, fixed agent-trust thresholds are 0.1 for \defensename, 0.8 for MATE, and 0.1 for LUCIA. ROBOSAC uses a detection-set consistency threshold of 0.3. These values remain unchanged across attack types and target counts. The Bayesian trust evaluations are deterministic for the fixed inputs and configurations. Screening calibration follows Appendix~\ref{subsec:screen_details}.

\subsection{Baselines and Metrics}
\label{subsec:defense_comparisons}

\myparagraph{Baselines.} We compare with SOTA fusion defense MATE, which represents broad Bayesian trust estimation \citep{hallyburton2025security}. Intermediate-fusion experiments additionally compare ROBOSAC~\citep{li2023among}, which selects collaborators by thresholding agreement with a trusted receiver, and LUCIA~\citep{wang2025threat}, which scores collaborators through feature disagreement. Appendix~\ref{subsec:baseline_implementation} describes the baseline implementations and evaluation settings.

\begin{figure}[t]
\centering
\includegraphics[width=0.85\linewidth]{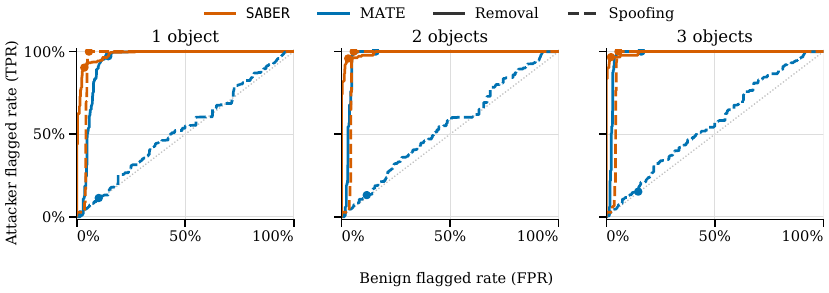}
\caption{Late-fusion ROC at attack completion. \defensename improves attacker–benign separation over MATE for both removal and spoofing, with especially large gains against spoofing.}
\label{fig:late_roc}
\end{figure}

\myparagraph{Metrics.} Following Eq.~(\ref{eq:detection_objective}), the true-positive rate (TPR), or \emph{attacker flagged rate}, is the rate of flagging the attacker. The false-positive rate (FPR), or \emph{benign flagged rate}, is the fraction of benign agents flagged. Higher TPR and lower FPR are better. We report FPR before attack onset and TPR/FPR at attack completion. We also report the area under the receiver operating characteristic curve (ROC-AUC) at attack completion to evaluate attacker--benign separation across thresholds. Higher AUC indicates better overall separation. Before attack onset, the designated attacker is benign and any flag on it is a false alarm. MATE, \defensename, and LUCIA are scored by thresholding their respective trust scores. ROBOSAC does not assign individual trust scores, so we count its rejection of a collaborator as a flag under the rule in Appendix~\ref{subsec:baseline_metrics}. Its TPR therefore measures attacker rejection rather than a separate identity classification.

\subsection{Main Results}
\label{subsec:main_results}

\myparagraph{Overall Detection Performance.} \defensename achieves a favorable detection--false-alarm balance across late and intermediate fusion while retaining low pre-attack false alarms (Table~\ref{tab:main_comparison}). MATE can attain higher removal detection, but at the cost of flagging substantially more benign agents, and remains weak against spoofing. \defensename improves low-FPR detection, especially for spoofing, and has higher ROC-AUC than MATE across the evaluated attack families and target counts (Figures~\ref{fig:late_roc} and~\ref{fig:intermediate_roc}, Appendix~\ref{sec:additional_results}). The improvement therefore extends beyond a single operating threshold. LUCIA is particularly effective against SOMBRA, but does not offer the same balance across the other attacks and pre-attack observations.

\begin{table}[t]
\centering
\begin{threeparttable}[t]
\scriptsize
\setlength{\tabcolsep}{2pt}
\begin{tabular}{@{}llrrrrrrrrrr@{}}
\toprule
\multirow{3}{*}{Attack} & \multirow{3}{*}{Method} & \multirow{2}{*}{Pre-attack} & \multicolumn{9}{c}{End-attack} \\ \cmidrule(lr){4-12} & & & \multicolumn{3}{c}{1 object} & \multicolumn{3}{c}{2 objects} & \multicolumn{3}{c}{3 objects} \\ \cmidrule(lr){4-6}\cmidrule(lr){7-9}\cmidrule(lr){10-12} & & FPR\% $\downarrow$ & TPR\% $\uparrow$ & FPR\% $\downarrow$ & AUC $\uparrow$ & TPR\% $\uparrow$ & FPR\% $\downarrow$ & AUC $\uparrow$ & TPR\% $\uparrow$ & FPR\% $\downarrow$ & AUC $\uparrow$ \\
\midrule
\multicolumn{12}{l}{\textit{Late fusion: bounding box perturbation}} \\
\rowcolor{Gray}\cellcolor{white} & MATE & 22.81 & 97.22 & 14.53 & 0.939 & 100.00 & 15.64 & 0.966 & 100.00 & 16.12 & 0.975 \\

\rowcolor{LightGreen}\cellcolor{white}\multirow{-2}{*}{Removal} & \defensename & \textbf{1.52} & 90.29 & 3.48 & \textbf{0.984} & 95.87 & 3.09 & \textbf{0.993} & 96.67 & 2.13 & \textbf{0.995} \\
\midrule
\rowcolor{Gray}\cellcolor{white} & MATE & 22.25 & 11.36 & 10.13 & 0.531 & 13.11 & 11.63 & 0.549 & 15.20 & 14.70 & 0.566 \\
\rowcolor{LightGreen}\cellcolor{white}\multirow{-2}{*}{Spoofing} & \defensename & \textbf{0.60} & 100.00 & 5.59 & \textbf{0.958} & 100.00 & 5.59 & \textbf{0.959} & 100.00 & 5.59 & \textbf{0.959} \\
\midrule
\multicolumn{12}{l}{\textit{Intermediate fusion: feature perturbation}} \\
\rowcolor{Gray}\cellcolor{white} & ROBOSAC & 2.93 & 3.81 & 3.47 & N/A & 3.74 & 3.42 & N/A & 3.67 & 3.42 & N/A \\
 & LUCIA & 27.49 & 31.09 & 27.47 & 0.529 & 33.80 & 26.54 & 0.556 & 35.58 & 25.49 & 0.580 \\
\rowcolor{Gray}\cellcolor{white} & MATE & 27.86 & 76.83 & 26.03 & 0.784 & 91.50 & 26.17 & 0.881 & 96.19 & 26.69 & 0.924 \\
\rowcolor{LightGreen}\cellcolor{white}\multirow{-4}{*}{\shortstack[c]{\strut PGD\\\strut Removal}} & \defensename & \textbf{1.27} & 48.68 & 1.32 & \textbf{0.819} & 67.16 & 1.17 & \textbf{0.909} & 77.42 & 1.17 & \textbf{0.945} \\
\midrule
\rowcolor{Gray}\cellcolor{white} & ROBOSAC & 2.93 & 3.89 & 3.52 & N/A & 3.81 & 3.49 & N/A & 3.96 & 3.49 & N/A \\
 & LUCIA & 27.49 & 30.01 & 27.66 & 0.522 & 32.55 & 26.69 & 0.544 & 34.26 & 25.95 & 0.563 \\
\rowcolor{Gray}\cellcolor{white} & MATE & 27.86 & 94.72 & 25.95 & 0.872 & 99.41 & 26.61 & 0.939 & 100.00 & 27.64 & 0.962 \\
\rowcolor{LightGreen}\cellcolor{white}\multirow{-4}{*}{\shortstack[c]{\strut DF\\\strut Removal}} & \defensename & \textbf{1.27} & 85.92 & 1.32 & \textbf{0.963} & 92.38 & 1.03 & \textbf{0.986} & 93.84 & 0.88 & \textbf{0.992} \\
\midrule
\rowcolor{Gray}\cellcolor{white} & ROBOSAC & 2.93 & 4.84 & 3.67 & N/A & 4.99 & 3.67 & N/A & 5.72 & 3.96 & N/A \\
 & LUCIA & 27.49 & 90.91 & 3.27 & \textbf{0.993} & 98.09 & 0.59 & \textbf{1.000} & 99.56 & 0.07 & \textbf{1.000} \\
\rowcolor{Gray}\cellcolor{white} & MATE & 27.86 & 97.07 & 26.74 & 0.897 & 99.12 & 26.88 & 0.942 & 99.71 & 27.98 & 0.959 \\
\rowcolor{LightGreen}\cellcolor{white}\multirow{-4}{*}{\shortstack[c]{\strut SOMBRA\\\strut Removal}} & \defensename & \textbf{1.27} & 85.92 & 1.61 & 0.965 & 92.38 & 1.17 & 0.985 & 93.84 & 0.88 & 0.991 \\
\midrule
\rowcolor{Gray}\cellcolor{white} & ROBOSAC & \textbf{0.00} & 1.59 & 1.63 & N/A & 1.59 & 1.61 & N/A & 1.59 & 1.61 & N/A \\
 & LUCIA & 24.87 & 24.78 & 28.24 & 0.468 & 24.93 & 27.95 & 0.473 & 25.36 & 27.95 & 0.477 \\
\rowcolor{Gray}\cellcolor{white} & MATE & 22.25 & 27.09 & 21.13 & 0.575 & 35.45 & 23.01 & 0.616 & 42.94 & 23.90 & 0.652 \\
\rowcolor{LightGreen}\cellcolor{white}\multirow{-4}{*}{\shortstack[c]{\strut PGD\\\strut Spoofing}} & \defensename & \textbf{0.00} & 55.62 & 1.44 & \textbf{0.819} & 76.66 & 1.44 & \textbf{0.910} & 85.01 & 1.44 & \textbf{0.947} \\
\midrule
\rowcolor{Gray}\cellcolor{white} & ROBOSAC & \textbf{0.00} & 1.87 & 1.87 & N/A & 1.87 & 1.87 & N/A & 1.87 & 1.87 & N/A \\
 & LUCIA & 24.87 & 27.09 & 27.38 & 0.491 & 29.32 & 26.22 & 0.513 & 31.20 & 25.29 & 0.533 \\
\rowcolor{Gray}\cellcolor{white} & MATE & 22.25 & 34.29 & 20.70 & 0.629 & 46.40 & 22.72 & 0.693 & 61.10 & 25.34 & 0.760 \\
\rowcolor{LightGreen}\cellcolor{white}\multirow{-4}{*}{\shortstack[c]{\strut DF\\\strut Spoofing}} & \defensename & \textbf{0.00} & 87.61 & 1.44 & \textbf{0.966} & 96.25 & 1.30 & \textbf{0.989} & 98.27 & 1.30 & \textbf{0.993} \\
\bottomrule
\end{tabular}
\begin{tablenotes}[flushleft]\tiny
\item[$*$] AUC denotes full ROC-AUC. ROBOSAC provides discrete operating points, so AUC is not applicable.
\end{tablenotes}
\end{threeparttable}
\caption{Late- and intermediate-fusion results. \defensename achieves a favorable detection–false-alarm trade-off across both fusion settings, with particularly strong spoofing detection.}
\label{tab:main_comparison}
\end{table}

\myparagraph{Analysis of Baseline Performance.} MATE pools evidence across objects, allowing unrelated correct reports to offset a targeted error. Spoofing adds a further difficulty: unlike an established real track, a new false track lacks a supporting history, and the attacker's own reports reinforce its estimated existence. The resulting uncertainty weakens the penalty. ROBOSAC instead tests consistency over the entire detection set. A few corrupted objects can leave that consistency above its acceptance threshold, allowing the attacker to remain in collaboration. A stricter threshold also rejects more benign collaborators. LUCIA assesses spatially pooled feature disagreement. SOMBRA's full-map perturbations produce a more distinctive discrepancy than the target-local modifications used by PGD and DF, explaining its strong SOMBRA detection but weak separation under the other attacks. \defensename addresses localized inconsistencies by excluding self-support from the reference and preserving each selected pair's evidence separately.

\myparagraph{Effect of Attack Strength.} PGD's lower attack success leaves more target reports unchanged, producing fewer contradictions for the trust estimator and explaining its lower detection rate. At attack completion, single-object removal succeeds on 63.54\%, 100\%, and 100\% of eligible independently decoded sender outputs for PGD, DF, and SOMBRA, respectively. Spoofing succeeds on 31.99\% and 80.98\% for PGD and DF. These ASRs concern whether the attack actually changes the targeted report, whereas defense metrics retain all attempted attacks, including unsuccessful ones. Appendix~\ref{subsec:attack_realization} defines the diagnostic populations.

\begin{figure}[t]
\centering
\includegraphics[width=\linewidth]{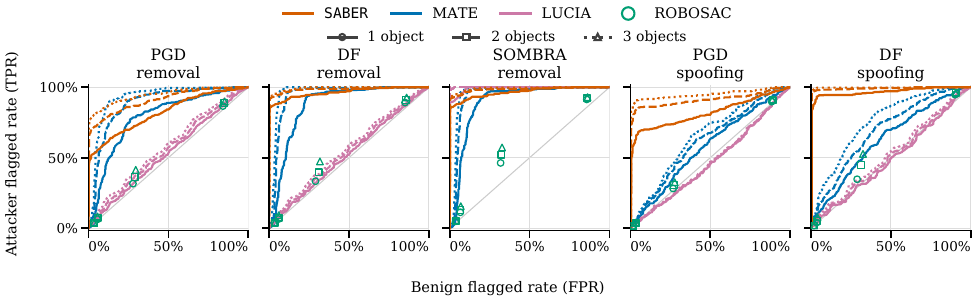}
\caption{Intermediate-fusion ROC at attack completion. \defensename achieves higher AUC than MATE and LUCIA under PGD and DF attacks, while LUCIA performs best against SOMBRA.}
\label{fig:intermediate_roc}
\end{figure}

\subsection{Component Ablations}
\label{subsec:ablation}

We evaluate four components on the full late-fusion dataset under single-object removal and spoofing attacks, using paired clean controls and the main experiment's attack interval. Each variant changes only the specified component:
\begin{enumerate}
\item \textbf{Shared agent Beta.} Replace pair-specific states and lowest-pair aggregation with one pooled agent-level Beta.
\item \textbf{Inclusive reference.} Include the assessed sender in both current and historical reference evidence.
\item \textbf{No extra amplification.} Remove the extra evidence multiplier while retaining reference-confidence weighting and negative-update bias.
\item \textbf{No memory.} Remove admission of prior contradictions while retaining the screening rule and subsequent updates.
\end{enumerate}

\myparagraph{Effect at a shared threshold.} At the same agent-trust threshold, removing pair isolation, sender exclusion, or extra amplification lowers TPR in both attack types (Table~\ref{tab:ablation}). These variants also yield fewer false alarms. This comparison shows that the components affect the flagging behavior, but does not distinguish weaker attacker--benign separation from a change in the trust-score scale.

\begin{table}[htbp]
\centering
\scriptsize
\begin{tabular}{lrrr}
\toprule
\multirow{2}{*}{Variant} & Pre-attack & \multicolumn{2}{c}{End-attack} \\
\cmidrule(lr){3-4}
 & FPR $\downarrow$ & TPR $\uparrow$ & FPR $\downarrow$ \\
\midrule
\multicolumn{4}{l}{\textit{Removal}} \\
\rowcolor{LightGreen}\defensename & 1.52 & 90.29 & 3.48 \\
Shared agent Beta & 1.52 & 69.38 & 2.03 \\
Inclusive reference & 0.63 & 5.54 & 0.56 \\
No amplification & 0.63 & 11.35 & 0.56 \\
\midrule
\multicolumn{4}{l}{\textit{Spoofing}} \\
\rowcolor{LightGreen}\defensename & 0.60 & 100.00 & 5.59 \\
Shared agent Beta & 0.30 & 74.96 & 5.28 \\
Inclusive reference & 0.00 & 0.00 & 0.00 \\
No amplification & 0.00 & 0.00 & 0.00 \\
\bottomrule
\end{tabular}
\caption{Late-fusion single-object ablations (\%) at a fixed threshold of 0.1.}
\label{tab:ablation}
\end{table}

\myparagraph{Effect at adjusted thresholds.} To test whether a different threshold can compensate for the lower detection, we allow each of the first three variants and the full method to use its own agent-trust threshold. We require at least 90\% TPR in both removal and spoofing, then select the qualifying tested threshold with the lowest worst-case FPR across the two attack types, breaking ties with the lower threshold. All other parameters remain unchanged. Even after this adjustment, each variant has higher FPR than \defensename in both attack types (Figure~\ref{fig:ablation_tradeoffs}). Thus, among the tested thresholds, the full method meets the common detection requirement with fewer benign flags. The benefit of these components is not explained solely by using the same threshold for differently scaled scores.

\begin{figure}[htbp]
\centering
\includegraphics[width=0.8\linewidth]{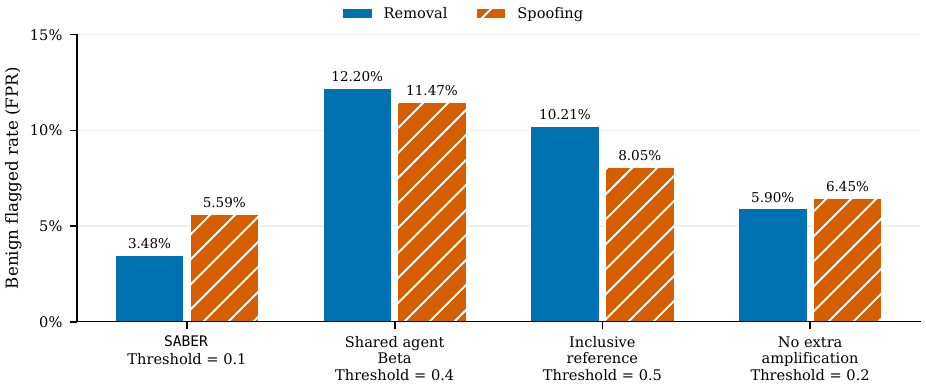}
\caption{Component ablations at thresholds achieving at least 90\% TPR in both removal and spoofing. The full method has lower FPR than each of the three variants in both attack types.}
\label{fig:ablation_tradeoffs}
\end{figure}

\myparagraph{Response with and without memory.} The fourth ablation examines response speed while keeping the screening rule and agent-trust threshold unchanged. Figure~\ref{fig:memory_ablation} shows that admitting cached contradictions makes attacker trust fall faster after screening. Memory uses evidence already observed during screening rather than requiring the pair to accumulate only new evidence after activation. It accelerates the trust response without shortening the CUSUM trigger delay.

\begin{figure}[htbp]
\centering
\includegraphics[width=0.40\linewidth]{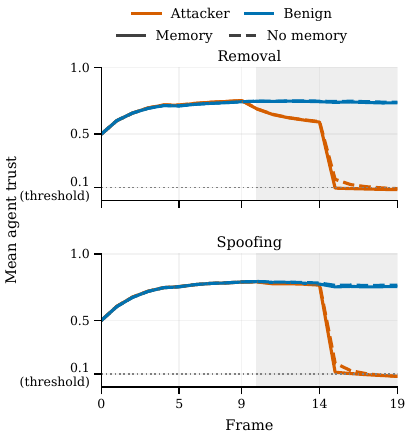}
\caption{Trust response with and without memory under single-object late-fusion attacks. Cached contradictions accelerate the attacker-trust decrease after screening.}
\label{fig:memory_ablation}
\end{figure}

\subsection{Sensitivity Studies}
\label{subsec:sensitivity}

\myparagraph{Stronger pooled penalties.} To test whether greater sensitivity in pooled assessment can address sparse attacks, we enlarge MATE's agent-level negative-update bias while keeping its threshold at 0.8 and all other settings fixed. Table~\ref{tab:ablation_sweeps} reports the results under single-object late-fusion attacks. Stronger penalties increase detection but also increase benign false alarms. Even with the largest multiplier, spoofing detection remains below \defensename while FPR is higher. At these fixed thresholds, increasing pooled-update sensitivity does not recover the full method's detection--false-alarm balance.

\begin{table}[htbp]
\centering
\scriptsize
\begin{tabular}{llrrrrrr}
\toprule
\multirow{3}{*}{Method} & \multirow{3}{*}{\makecell{Bias\\multiplier}} & \multicolumn{3}{c}{Removal} & \multicolumn{3}{c}{Spoofing} \\ \cmidrule(lr){3-5}\cmidrule(lr){6-8} &  & Pre-attack & \multicolumn{2}{c}{End-attack} & Pre-attack & \multicolumn{2}{c}{End-attack} \\  &  & FPR $\downarrow$ & TPR $\uparrow$ & FPR $\downarrow$ & FPR $\downarrow$ & TPR $\uparrow$ & FPR $\downarrow$ \\
\midrule
\rowcolor{LightGreen}\defensename & N/A & 1.52 & 90.29 & 3.48 & 0.60 & 100.00 & 5.59 \\
\multirow{3}{*}{MATE} & $1\times$ & 22.81 & 97.22 & 14.53 & 22.25 & 11.36 & 10.13 \\
 & $2\times$ & 26.91 & 99.79 & 22.58 & 24.77 & 23.05 & 18.86 \\
 & $4\times$ & 29.52 & 99.79 & 30.99 & 26.78 & 35.55 & 31.99 \\
\bottomrule
\end{tabular}
\caption{Stronger penalties of MATE on single-object late-fusion attacks (\%). MATE variants use a threshold of 0.8 and \defensename uses 0.1.}
\label{tab:ablation_sweeps}
\end{table}

\myparagraph{Attack duration.} We evaluate attacks lasting 5, 7, 10, and 20 frames. Figure~\ref{fig:attack_duration_overview} summarizes single-object attacks in both fusion settings. Late-fusion spoofing detection rises sharply between 7 and 10 frames, while intermediate-fusion spoofing detection generally improves with longer attacks. Removal detection varies less with duration, especially in intermediate fusion. Intermediate-fusion FPR stays very low throughout, whereas late-fusion FPR tends to increase with duration. The protocol and full results, including multiple targets and baselines, are provided in Appendix~\ref{sec:sensitivity_details}.

\begin{figure}[htbp]
\centering
\includegraphics[width=0.6\linewidth]{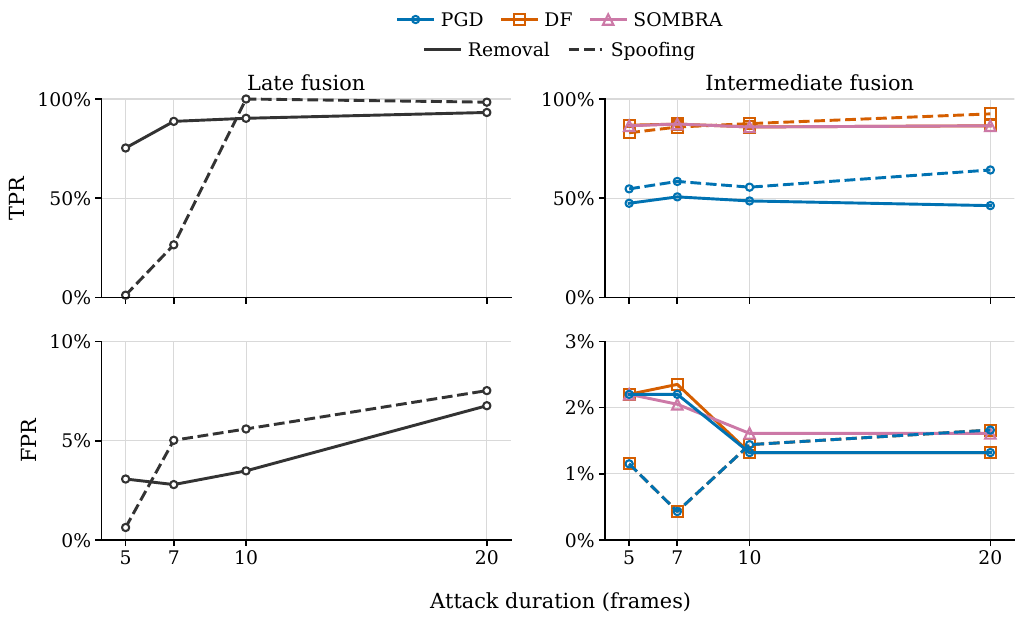}
\caption{Effect of attack duration on \defensename under single-object attacks. TPR and FPR are measured at attack completion with a fixed agent-trust threshold of 0.1. Solid and dashed lines denote removal and spoofing, respectively. Longer attacks generally improve spoofing detection, while intermediate-fusion FPR remains low.}
\label{fig:attack_duration_overview}
\end{figure}

\section{Discussion and Limitations}
\label{sec:discussion}

\myparagraph{Short or track-switching attacks.} \defensename accumulates evidence for each agent--track pair. One- or two-frame attacks may end before screening activates or sufficient evidence lowers trust. Switching targets similarly spreads contradictions across pair histories. Memory accelerates assessment within a pair but does not combine evidence across targets. As with other temporal trust estimators, we do not guarantee detection of such attacks, although individual errors may affect tier-1 trust.

\myparagraph{Insufficient benign observation coverage.} Like other consistency-based CP defenses, \defensename requires informative peer observations. If a track lies outside all benign agents' FOVs and lacks informative peer history, AGG cannot independently assess the disputed report. With only one benign observer, conflicting reports from agents with comparable trust and evidence strength can remain ambiguous. Excluding the assessed sender prevents self-support but cannot resolve this symmetry. Reliable identification therefore requires corroborating peer evidence, and FOV coverage alone does not guarantee accurate observations.

\section{Conclusion}
\label{sec:conclusion}

We presented \defensename, a two-tier Bayesian trust estimator addressing cross-object evidence dilution in collaborative perception. Isolating selected agent--track evidence and constraining agent trust through the lowest pair score prevent unrelated correct reports from directly masking localized contradictions. Our analysis gives sufficient conditions for stronger attack-induced trust reductions with bounded additional false alarms. Experiments on OPV2V show improved detection--false-alarm trade-offs for sparse removal and spoofing attacks in late and intermediate fusion.

\subsection*{Ethics Statement}

This work studies defensive trust estimation. Defense evaluations use public simulated dataset, with offline modifications to object reports or features encoded from stored sensor observations. A separate closed-loop experiment illustrates the potential safety consequences of sustained perception omissions. All vehicle control and collisions in that experiment occur within the simulator. No experiments control real vehicles, involve physical trials, or interact with deployed transportation or live network systems, so they do not expose road users or operational systems to experimental attacks. We introduce no new attack vector: removal and spoofing are established threat classes from published work, used here to assess a defense. Neither the dataset-based trust results nor the simulator illustration establish deployment readiness or guarantee collision avoidance. False alarms could suppress useful observations if such a method were deployed, while missed attacks could leave harmful information in collaboration. These risks require independent validation before any operational use.

\bibliography{references}
\bibliographystyle{plainnat}

\appendix
\section{Additional Method Definitions}
\label{sec:method_details}

\subsection{Evidence increments and weighted updates}
\label{subsec:beta}

Tier-2 scores never feed back into tier 1, whose trajectory therefore equals MATE under the same inputs \citep{hallyburton2025security}. For consistency $v\in[0,1]$, define
\begin{equation}
 n(v)=\begin{cases}\nu,&v<\theta,\\1,&v\ge\theta,\end{cases}
 \qquad \phi_+(v)=\frac{n(v)}{n(v)+1}v,\qquad
 \phi_-(v)=\frac{n(v)^2}{n(v)+1}(1-v),
 \label{eq:increments}
\end{equation}
where $\nu\ge1$ controls the negative-evidence bias and $0<\theta<1$ selects its branch. Reference and pair states can use different $\nu,\theta$. These are weighted consistency pseudo-counts, not direct observations of maliciousness.

For one eligible observation, suppress the pair and time indices and write $w=\lambda|2q-1|\ge0$ for the evidence weight in Eq.~(\ref{eq:consistency_weight}). Let $\widetilde\pi(\vartheta)$ be the propagated Beta density over pair reliability $\vartheta\in(0,1)$. The update
\begin{equation}
 \pi(\vartheta)\propto\widetilde\pi(\vartheta)
 [\vartheta^{\phi_+(v)}(1-\vartheta)^{\phi_-(v)}]^w
 \label{eq:weighted_bayes}
\end{equation}
adds $w\phi_+(v)$ and $w\phi_-(v)$ to its parameters. This is the generalized Bayesian interpretation of Eq.~(\ref{eq:tier2_update}) \citep{bissiri2016general}. Weighting the factor and weighting its increments are equivalent, so amplification is applied once.

\subsection{Screening eligibility and calibration}
\label{subsec:screen_details}

Eligibility follows Section~\ref{subsec:screening}. A matched report requires no separate visibility declaration. Invalid observation opportunities reset an untriggered CUSUM. They do not erase retained Bayesian evidence or establish recovery.

Using the same eligibility rules and candidate lifecycle as evaluation, estimate $p_0=(N_{\mathrm{event}}+1)/(N_{\mathrm{eligible}}+2)$ for each screen. Here $N_{\mathrm{eligible}}$ counts eligible clean agent--track observations and $N_{\mathrm{event}}$ counts those with $e_{ij}^t=1$. Add-one smoothing avoids zero or unit event probabilities. Let $S_{\max,n}$ denote the window maximum defined in Section~\ref{subsec:screening}, set to zero when no pair is eligible. For $N_{\mathrm{cal}}>0$ calibration windows and budget $\epsilon_{\mathrm{screen}}\in(0,1)$, select $h>0$ satisfying
\begin{equation}
 \frac{1}{N_{\mathrm{cal}}}\sum_{n=1}^{N_{\mathrm{cal}}}
 \mathbf 1\{S_{\max,n}\ge h\}\le\epsilon_{\mathrm{screen}}.
 \label{eq:screen_calibration}
\end{equation}
Move $h$ above a tied value if equality would violate this bound. Starting from $S=0$, a continuously eligible sequence with $e_{ij}^t=1$ at every frame needs
\begin{equation}
 r_*=\left\lceil\frac{h}{\log(p_1/p_0)}\right\rceil
 \label{eq:screen_delay}
\end{equation}
observations to trigger.

\subsection{Peer reference and evidence weight}
\label{subsec:references}

Write the peer reference as $q_{j,-i}=a_{j,-i}/(a_{j,-i}+b_{j,-i})$. Its parameters update as
\begin{equation}
 (a_{j,-i}^t,b_{j,-i}^t)
 =(\widetilde a_{j,-i}^t,\widetilde b_{j,-i}^t)
 +\sum_{k\in\mathcal O_{j,-i}^t}u_{kj,-i}^t
 (\phi_{+,\mathrm{ref}}(d_{kj}^t),\phi_{-,\mathrm{ref}}(d_{kj}^t)),
 \label{eq:reference_update}
\end{equation}
where $\mathcal O_{j,-i}^t$ contains eligible peers other than $i$, and $u_{kj,-i}^t=T_{1,k,-i}^{t-1}$ uses the sender-excluded broad assessment defined in Section~\ref{subsec:focused}, with its prior mean at the first frame. Propagation returns the reference toward its neutral prior.

The confidence factor $g(q)=|2q-1|\in[0,1]$ is a design choice, not an optimality claim. To analyze $w=\lambda g(q)$, let $N=\widetilde\alpha+\widetilde\beta>0$ be the pair's total Beta mass and $B=\widetilde\alpha/N$ its mean. Write $x=\phi_+(v)$ and $y=\phi_-(v)$ for the nonnegative positive- and negative-evidence increments before weighting. Holding $B,N,x,y$ fixed, the effect of $w$ is
\begin{equation}
 B_{\mathrm{new}}(w)=\frac{NB+wx}{N+w(x+y)},\qquad
 \frac{\partial B_{\mathrm{new}}}{\partial w}=\frac{N[x(1-B)-yB]}{[N+w(x+y)]^2}.
 \label{eq:weight_effect}
\end{equation}
For $x+y>0$, stronger weighting lowers trust exactly when the evidence-implied mean $x/(x+y)$ is below $B$.

\subsection{Memory admission}
\label{subsec:memory}

Let $c\in\{\mathrm{FN},\mathrm{FP}\}$ identify the omission or unmatched-report screen, with its own parameters $h_c,p_{0,c},p_{1,c}$. Set the fixed frame lookback $L_c=\lceil h_c/\log(p_{1,c}/p_{0,c})\rceil$. For agent $i$ and track $j$, let $s_{ij}^{\mathrm{start}}$ mark the start of the current track-hypothesis episode, and let $\mathcal E_{ij}(t)$ contain the distinct observation times already admitted into the pair's Beta state by time $t$. At the first screening trigger $s$, the set of past frames eligible for memory admission is
\begin{equation}
 \begin{aligned}
 \mathcal H_{ij}(s)=\big\{\ell:\;&
 \max(s_{ij}^{\mathrm{start}},s-L_c+1)\le\ell<s,\\
 &\ell\notin\mathcal E_{ij}(s^-),\quad
 I_{ij}^{\ell}=1,\ v_{ij}^{\ell}<1/2,\ w_{ij}^{\ell}>0\big\}.
 \end{aligned}
 \label{eq:memory_frames}
\end{equation}
$I_{ij}^{\ell}$ indicates evidence eligibility, and $\mathcal E_{ij}(s^-)$ is the admission set immediately before frame $s$. The memory increments in Eq.~(\ref{eq:tier2_update}) are
\begin{equation}
 M_{ij,\pm}^{s}=\sum_{\ell\in\mathcal H_{ij}(s)}
 D_{\ell,s}w_{ij}^{\ell}\phi_\pm(v_{ij}^{\ell}).
 \label{eq:memory_release}
\end{equation}
The $+$ and $-$ signs denote separate sums added to $\alpha_{ij}$ and $\beta_{ij}$, respectively. The factor $D_{\ell,t}=\prod_{r=\ell+1}^{t}(1-\rho_r)$ discounts evidence by its age, with $D_{t,t}=1$. An empty $\mathcal H_{ij}(s)$ contributes zero, and on other frames $M_{ij,\pm}^t=0$. Recovery changes assessment membership without erasing the retained state or admission history. A new hypothesis uses a new track index.

\section{Additional Experimental Settings}
\label{sec:implementation_details}

\subsection{Perception inputs}
\label{subsec:experiment_configuration}

All reports use a common world coordinate system. Reported boxes determine existence evidence and introduce unmatched candidates. We follow the visibility algorithm proposed by \citet{hallyburton2025security}. AGG accepts the resulting declarations. Paired clean and attacked runs share the same declarations, including the removal target's visible declaration during attack.

\subsection{Selection and cohort sizes}
\label{subsec:case_selection}

In addition to the selection rules in Section~\ref{subsec:evaluation_settings}, removal targets must exist throughout the window and remain within 40\,m of at least one agent. Spoofed boxes must lie within 40\,m of the attacker and be visible to it and at least two benign agents throughout the main attack interval. Table~\ref{tab:cohort_sizes} gives the cohort breakdown, including cases formed with alternative attackers or spoofing positions.

\begin{table}[htbp]
\centering
\scriptsize
\begin{tabular}{lrrrr}
\toprule
Cohort & 3 agents & 4 agents & 5 agents & Total \\
\midrule
Removal & 218 & 48 & 75 & 341 \\
Spoofing & 203 & 35 & 109 & 347 \\
\bottomrule
\end{tabular}
\caption{Full selected cohorts}
\label{tab:cohort_sizes}
\end{table}

\subsection{Baseline implementation and evaluation}
\label{subsec:baseline_implementation}

In intermediate fusion, all defenses receive the same feature messages generated against the undefended centralized model. MATE and \defensename share report associations and candidate hypotheses. ROBOSAC and LUCIA preserve their trusted-receiver assumptions. ROBOSAC samples $n-2$ teammates per trial, with budgets of 7/8/9 trials for 3/4/5-agent groups, and falls back to receiver-only perception if no consensus is reached. LUCIA retains continuous trust-weighted attention, with score thresholds used only for reporting flags.

\subsection{Metric calculation}
\label{subsec:baseline_metrics}

ROBOSAC flags tested collaborators excluded from its accepted group. An untested attacker remains in the denominator without detection credit, and receiver-only fallback (i.e. single-agent perception) rejects all peers, contributing to both TPR and FPR. ROBOSAC uses designated benign receivers at both checkpoints. LUCIA uses all receivers before attack and benign receivers during attack, excluding self-scores. Attack identities define evaluation roles only.

ROC and AUC of late-fusion and intermediate-fusion results use the same population weights, with benign peers in attacked scenes as negatives. Trust-score AUC uses complete empirical curves. ROBOSAC contributes measured operating points only.

\subsection{Attack realization diagnostics}
\label{subsec:attack_realization}

ASR measures single-object changes in the independently decoded attacker output at the attack endpoint, using BEV IoU at least 0.3. Removal succeeds when a clean-detected target disappears, and spoofing succeeds when a target absent from clean output appears. Their diagnostic denominators contain clean-detected and clean-absent targets, respectively. Defense metrics retain all selected attack attempts regardless of these outcomes.

\section{Attack-Duration Sensitivity}
\label{sec:sensitivity_details}

This appendix provides the protocol and full attack-duration results, extending the single-object summary in Figure~\ref{fig:attack_duration_overview} to multiple targets and baselines.

\subsection{Attack Duration}
\label{subsec:attack_duration_sensitivity}

Compare 5-, 7-, 10-, and 20-frame attacks at endpoints 14, 16, 19, and 29 (0-indexed), retaining all selected cases and fixed decision thresholds. The screening parameters are calibrated using clean scenarios, following Appendix~\ref{subsec:screen_details}. The memory lookback follows Appendix~\ref{subsec:memory}. Tables~\ref{tab:duration_removal}, \ref{tab:duration_spoofing}, \ref{tab:intermediate_duration_removal}, and~\ref{tab:intermediate_duration_spoofing} report all target counts and defenses. ROBOSAC remains weak against localized inconsistencies, and LUCIA remains strongest against SOMBRA but less effective against the other two attacks. These two baselines assess each endpoint independently, so their changes across durations reflect different observations and perturbations rather than accumulated history.

\begin{table}[htbp]
\centering
\scriptsize
\setlength{\tabcolsep}{2pt}
\begin{tabular}{@{}rlrrrrrrrrrr@{}}
\toprule
\multirow{3}{*}{\makecell{Duration\\(frames)}} & \multirow{3}{*}{Method} & \multirow{2}{*}{Pre-attack} & \multicolumn{9}{c}{End-attack} \\
\cmidrule(lr){4-12} & & & \multicolumn{3}{c}{1 object} & \multicolumn{3}{c}{2 objects} & \multicolumn{3}{c}{3 objects} \\
\cmidrule(lr){4-6}\cmidrule(lr){7-9}\cmidrule(lr){10-12} & & FPR $\downarrow$ & TPR $\uparrow$ & FPR $\downarrow$ & AUC $\uparrow$ & TPR $\uparrow$ & FPR $\downarrow$ & AUC $\uparrow$ & TPR $\uparrow$ & FPR $\downarrow$ & AUC $\uparrow$ \\
\midrule
\rowcolor{LightGreen}\cellcolor{white} & \defensename & \textbf{1.52} & 75.34 & 3.07 & \textbf{0.977} & 91.74 & 2.94 & \textbf{0.990} & 94.58 & 2.53 & \textbf{0.993} \\
\rowcolor{Gray}\cellcolor{white}\multirow{-2}{*}{5} & MATE & 22.81 & 99.52 & 16.54 & 0.938 & 100.00 & 17.09 & 0.963 & 100.00 & 17.67 & 0.971 \\
\midrule
\rowcolor{LightGreen}\cellcolor{white} & \defensename & \textbf{1.52} & 88.74 & 2.79 & \textbf{0.983} & 94.16 & 2.53 & \textbf{0.992} & 95.13 & 2.40 & \textbf{0.995} \\
\rowcolor{Gray}\cellcolor{white}\multirow{-2}{*}{7} & MATE & 22.81 & 99.66 & 14.31 & 0.941 & 100.00 & 15.71 & 0.967 & 100.00 & 15.93 & 0.974 \\
\midrule
\rowcolor{LightGreen}\cellcolor{white} & \defensename & \textbf{1.52} & 90.29 & 3.48 & \textbf{0.984} & 95.87 & 3.09 & \textbf{0.993} & 96.67 & 2.13 & \textbf{0.995} \\
\rowcolor{Gray}\cellcolor{white}\multirow{-2}{*}{10} & MATE & 22.81 & 97.22 & 14.53 & 0.939 & 100.00 & 15.64 & 0.966 & 100.00 & 16.12 & 0.975 \\
\midrule
\rowcolor{LightGreen}\cellcolor{white} & \defensename & \textbf{1.52} & 93.22 & 6.76 & \textbf{0.984} & 96.67 & 5.95 & \textbf{0.993} & 97.22 & 5.02 & \textbf{0.994} \\
\rowcolor{Gray}\cellcolor{white}\multirow{-2}{*}{20} & MATE & 22.81 & 88.71 & 13.82 & 0.933 & 99.20 & 14.25 & 0.970 & 100.00 & 14.55 & 0.979 \\
\bottomrule
\end{tabular}
\caption{\defensename results under late-fusion removal across attack durations. TPR and FPR are percentages. AUC denotes full ROC-AUC. MATE and \defensename use thresholds of 0.8 and 0.1.}
\label{tab:duration_removal}
\end{table}

\begin{table}[htbp]
\centering
\scriptsize
\setlength{\tabcolsep}{2pt}
\begin{tabular}{@{}rlrrrrrrrrrr@{}}
\toprule
\multirow{3}{*}{\makecell{Duration\\(frames)}} & \multirow{3}{*}{Method} & \multirow{2}{*}{Pre-attack} & \multicolumn{9}{c}{End-attack} \\
\cmidrule(lr){4-12} & & & \multicolumn{3}{c}{1 object} & \multicolumn{3}{c}{2 objects} & \multicolumn{3}{c}{3 objects} \\
\cmidrule(lr){4-6}\cmidrule(lr){7-9}\cmidrule(lr){10-12} & & FPR $\downarrow$ & TPR $\uparrow$ & FPR $\downarrow$ & AUC $\uparrow$ & TPR $\uparrow$ & FPR $\downarrow$ & AUC $\uparrow$ & TPR $\uparrow$ & FPR $\downarrow$ & AUC $\uparrow$ \\
\midrule
\rowcolor{LightGreen}\cellcolor{white} & \defensename & \textbf{0.60} & 1.24 & 0.63 & \textbf{0.948} & 2.16 & 0.63 & \textbf{0.949} & 3.08 & 0.63 & \textbf{0.950} \\
\rowcolor{Gray}\cellcolor{white}\multirow{-2}{*}{5} & MATE & 22.25 & 19.52 & 16.55 & 0.490 & 19.83 & 17.38 & 0.477 & 19.83 & 19.73 & 0.464 \\
\midrule
\rowcolor{LightGreen}\cellcolor{white} & \defensename & \textbf{0.60} & 26.59 & 5.02 & \textbf{0.949} & 27.81 & 5.02 & \textbf{0.950} & 28.32 & 5.02 & \textbf{0.950} \\
\rowcolor{Gray}\cellcolor{white}\multirow{-2}{*}{7} & MATE & 22.25 & 14.17 & 13.08 & 0.501 & 14.89 & 15.30 & 0.495 & 18.84 & 16.34 & 0.492 \\
\midrule
\rowcolor{LightGreen}\cellcolor{white} & \defensename & \textbf{0.60} & 100.00 & 5.59 & \textbf{0.958} & 100.00 & 5.59 & \textbf{0.959} & 100.00 & 5.59 & \textbf{0.959} \\
\rowcolor{Gray}\cellcolor{white}\multirow{-2}{*}{10} & MATE & 22.25 & 11.36 & 10.13 & 0.531 & 13.11 & 11.63 & 0.549 & 15.20 & 14.70 & 0.566 \\
\midrule
\rowcolor{LightGreen}\cellcolor{white} & \defensename & \textbf{0.60} & 98.41 & 7.52 & \textbf{0.976} & 100.00 & 7.52 & \textbf{0.981} & 100.00 & 7.52 & \textbf{0.982} \\
\rowcolor{Gray}\cellcolor{white}\multirow{-2}{*}{20} & MATE & 22.25 & 14.17 & 11.59 & 0.620 & 19.80 & 12.07 & 0.681 & 27.18 & 12.65 & 0.730 \\
\bottomrule
\end{tabular}
\caption{\defensename results under late-fusion spoofing across attack durations. TPR and FPR are percentages. AUC denotes full ROC-AUC. MATE and \defensename use thresholds of 0.8 and 0.1.}
\label{tab:duration_spoofing}
\end{table}

\begin{table}[p]
\centering
\scriptsize
\setlength{\tabcolsep}{2pt}
\renewcommand{\arraystretch}{0.92}
\begin{tabular}{@{}llrrrrrrrrrr@{}}
\toprule
\multirow{3}{*}{Attack} & \multirow{3}{*}{Method} & \multirow{2}{*}{Pre-attack} & \multicolumn{9}{c}{End-attack} \\
\cmidrule(lr){4-12} & & & \multicolumn{3}{c}{1 object} & \multicolumn{3}{c}{2 objects} & \multicolumn{3}{c}{3 objects} \\
\cmidrule(lr){4-6}\cmidrule(lr){7-9}\cmidrule(lr){10-12} & & FPR $\downarrow$ & TPR $\uparrow$ & FPR $\downarrow$ & AUC $\uparrow$ & TPR $\uparrow$ & FPR $\downarrow$ & AUC $\uparrow$ & TPR $\uparrow$ & FPR $\downarrow$ & AUC $\uparrow$ \\
\midrule
\multicolumn{12}{l}{\textit{5-frame attack}} \\
\rowcolor{Gray}\cellcolor{white} & ROBOSAC & 2.93 & 3.23 & 3.52 & N/A & 3.30 & 3.47 & N/A & 3.30 & 3.42 & N/A \\
 & LUCIA & 27.49 & 30.94 & 27.27 & 0.529 & 32.72 & 26.34 & 0.556 & 34.87 & 25.54 & 0.581 \\
\rowcolor{Gray}\cellcolor{white} & MATE & 27.86 & 73.31 & 27.42 & 0.762 & 87.10 & 27.76 & 0.856 & 94.43 & 28.15 & 0.910 \\
\rowcolor{LightGreen}\cellcolor{white}\multirow{-4}{*}{PGD} & \defensename & \textbf{1.27} & 47.51 & 2.20 & \textbf{0.795} & 63.93 & 2.13 & \textbf{0.884} & 75.07 & 2.13 & \textbf{0.932} \\
\cmidrule(lr){1-12}
\rowcolor{Gray}\cellcolor{white} & ROBOSAC & 2.93 & 3.37 & 3.49 & N/A & 3.23 & 3.45 & N/A & 3.23 & 3.27 & N/A \\
 & LUCIA & 27.49 & 30.35 & 27.49 & 0.521 & 31.45 & 26.37 & 0.542 & 33.38 & 25.95 & 0.561 \\
\rowcolor{Gray}\cellcolor{white} & MATE & 27.86 & 94.43 & 27.35 & 0.866 & 98.83 & 28.67 & 0.930 & 100.00 & 28.96 & 0.955 \\
\rowcolor{LightGreen}\cellcolor{white}\multirow{-4}{*}{DF} & \defensename & \textbf{1.27} & 86.80 & 2.20 & \textbf{0.965} & 92.96 & 2.13 & \textbf{0.986} & 94.13 & 2.13 & \textbf{0.994} \\
\cmidrule(lr){1-12}
\rowcolor{Gray}\cellcolor{white} & ROBOSAC & 2.93 & 4.62 & 3.52 & N/A & 4.77 & 3.45 & N/A & 5.21 & 3.45 & N/A \\
 & LUCIA & 27.49 & 92.82 & 2.81 & \textbf{0.995} & 99.12 & 0.22 & \textbf{1.000} & 100.00 & 0.00 & \textbf{1.000} \\
\rowcolor{Gray}\cellcolor{white} & MATE & 27.86 & 96.19 & 27.74 & 0.887 & 99.41 & 28.64 & 0.934 & 100.00 & 28.64 & 0.953 \\
\rowcolor{LightGreen}\cellcolor{white}\multirow{-4}{*}{SOMBRA} & \defensename & \textbf{1.27} & 86.51 & 2.20 & 0.963 & 93.26 & 2.13 & 0.986 & 94.43 & 1.98 & 0.993 \\
\midrule
\multicolumn{12}{l}{\textit{7-frame attack}} \\
\rowcolor{Gray}\cellcolor{white} & ROBOSAC & 2.93 & 3.67 & 3.64 & N/A & 3.74 & 3.59 & N/A & 3.89 & 3.59 & N/A \\
 & LUCIA & 27.49 & 31.40 & 26.76 & 0.530 & 33.16 & 26.08 & 0.556 & 34.90 & 25.54 & 0.582 \\
\rowcolor{Gray}\cellcolor{white} & MATE & 27.86 & 74.49 & 26.61 & 0.772 & 90.32 & 27.20 & 0.868 & 95.31 & 27.13 & 0.917 \\
\rowcolor{LightGreen}\cellcolor{white}\multirow{-4}{*}{PGD} & \defensename & \textbf{1.27} & 50.73 & 2.20 & \textbf{0.804} & 67.74 & 2.05 & \textbf{0.903} & 77.13 & 2.20 & \textbf{0.941} \\
\cmidrule(lr){1-12}
\rowcolor{Gray}\cellcolor{white} & ROBOSAC & 2.93 & 3.74 & 3.69 & N/A & 3.59 & 3.62 & N/A & 3.96 & 3.64 & N/A \\
 & LUCIA & 27.49 & 30.52 & 26.91 & 0.522 & 31.92 & 26.25 & 0.543 & 33.70 & 25.68 & 0.561 \\
\rowcolor{Gray}\cellcolor{white} & MATE & 27.86 & 95.01 & 27.13 & 0.872 & 99.12 & 27.49 & 0.933 & 100.00 & 27.86 & 0.957 \\
\rowcolor{LightGreen}\cellcolor{white}\multirow{-4}{*}{DF} & \defensename & \textbf{1.27} & 87.39 & 2.35 & \textbf{0.965} & 92.96 & 2.20 & \textbf{0.986} & 94.43 & 2.20 & \textbf{0.993} \\
\cmidrule(lr){1-12}
\rowcolor{Gray}\cellcolor{white} & ROBOSAC & 2.93 & 4.91 & 4.11 & N/A & 5.50 & 4.11 & N/A & 5.87 & 4.11 & N/A \\
 & LUCIA & 27.49 & 92.82 & 2.79 & \textbf{0.995} & 99.12 & 0.22 & \textbf{1.000} & 99.71 & 0.07 & \textbf{1.000} \\
\rowcolor{Gray}\cellcolor{white} & MATE & 27.86 & 96.48 & 26.91 & 0.893 & 99.12 & 27.59 & 0.936 & 100.00 & 27.76 & 0.955 \\
\rowcolor{LightGreen}\cellcolor{white}\multirow{-4}{*}{SOMBRA} & \defensename & \textbf{1.27} & 87.39 & 2.05 & 0.964 & 92.96 & 2.05 & 0.984 & 94.43 & 1.91 & 0.992 \\
\midrule
\multicolumn{12}{l}{\textit{10-frame attack}} \\
\rowcolor{Gray}\cellcolor{white} & ROBOSAC & 2.93 & 3.81 & 3.47 & N/A & 3.74 & 3.42 & N/A & 3.67 & 3.42 & N/A \\
 & LUCIA & 27.49 & 31.09 & 27.47 & 0.529 & 33.80 & 26.54 & 0.556 & 35.58 & 25.49 & 0.580 \\
\rowcolor{Gray}\cellcolor{white} & MATE & 27.86 & 76.83 & 26.03 & 0.784 & 91.50 & 26.17 & 0.881 & 96.19 & 26.69 & 0.924 \\
\rowcolor{LightGreen}\cellcolor{white}\multirow{-4}{*}{PGD} & \defensename & \textbf{1.27} & 48.68 & 1.32 & \textbf{0.819} & 67.16 & 1.17 & \textbf{0.909} & 77.42 & 1.17 & \textbf{0.945} \\
\cmidrule(lr){1-12}
\rowcolor{Gray}\cellcolor{white} & ROBOSAC & 2.93 & 3.89 & 3.52 & N/A & 3.81 & 3.49 & N/A & 3.96 & 3.49 & N/A \\
 & LUCIA & 27.49 & 30.01 & 27.66 & 0.522 & 32.55 & 26.69 & 0.544 & 34.26 & 25.95 & 0.563 \\
\rowcolor{Gray}\cellcolor{white} & MATE & 27.86 & 94.72 & 25.95 & 0.872 & 99.41 & 26.61 & 0.939 & 100.00 & 27.64 & 0.962 \\
\rowcolor{LightGreen}\cellcolor{white}\multirow{-4}{*}{DF} & \defensename & \textbf{1.27} & 85.92 & 1.32 & \textbf{0.963} & 92.38 & 1.03 & \textbf{0.986} & 93.84 & 0.88 & \textbf{0.992} \\
\cmidrule(lr){1-12}
\rowcolor{Gray}\cellcolor{white} & ROBOSAC & 2.93 & 4.84 & 3.67 & N/A & 4.99 & 3.67 & N/A & 5.72 & 3.96 & N/A \\
 & LUCIA & 27.49 & 90.91 & 3.27 & \textbf{0.993} & 98.09 & 0.59 & \textbf{1.000} & 99.56 & 0.07 & \textbf{1.000} \\
\rowcolor{Gray}\cellcolor{white} & MATE & 27.86 & 97.07 & 26.74 & 0.897 & 99.12 & 26.88 & 0.942 & 99.71 & 27.98 & 0.959 \\
\rowcolor{LightGreen}\cellcolor{white}\multirow{-4}{*}{SOMBRA} & \defensename & \textbf{1.27} & 85.92 & 1.61 & 0.965 & 92.38 & 1.17 & 0.985 & 93.84 & 0.88 & 0.991 \\
\midrule
\multicolumn{12}{l}{\textit{20-frame attack}} \\
\rowcolor{Gray}\cellcolor{white} & ROBOSAC & 2.93 & 4.25 & 3.96 & N/A & 4.33 & 3.91 & N/A & 4.62 & 4.13 & N/A \\
 & LUCIA & 27.49 & 30.33 & 27.20 & 0.526 & 33.43 & 26.42 & 0.550 & 34.97 & 24.90 & 0.573 \\
\rowcolor{Gray}\cellcolor{white} & MATE & 27.86 & 74.19 & 32.45 & 0.770 & 92.08 & 33.19 & 0.877 & 97.36 & 33.70 & 0.923 \\
\rowcolor{LightGreen}\cellcolor{white}\multirow{-4}{*}{PGD} & \defensename & \textbf{1.27} & 46.33 & 1.32 & \textbf{0.842} & 70.38 & 1.17 & \textbf{0.936} & 77.71 & 1.17 & \textbf{0.964} \\
\cmidrule(lr){1-12}
\rowcolor{Gray}\cellcolor{white} & ROBOSAC & 2.93 & 4.25 & 3.96 & N/A & 4.62 & 3.96 & N/A & 4.62 & 4.15 & N/A \\
 & LUCIA & 27.49 & 29.45 & 27.22 & 0.522 & 32.04 & 26.54 & 0.541 & 34.02 & 25.54 & 0.558 \\
\rowcolor{Gray}\cellcolor{white} & MATE & 27.86 & 89.74 & 32.50 & 0.850 & 98.53 & 33.94 & 0.930 & 100.00 & 34.68 & 0.959 \\
\rowcolor{LightGreen}\cellcolor{white}\multirow{-4}{*}{DF} & \defensename & \textbf{1.27} & 86.51 & 1.32 & \textbf{0.965} & 92.96 & 1.17 & \textbf{0.987} & 93.84 & 1.03 & \textbf{0.994} \\
\cmidrule(lr){1-12}
\rowcolor{Gray}\cellcolor{white} & ROBOSAC & 2.93 & 5.28 & 4.25 & N/A & 6.01 & 4.42 & N/A & 6.13 & 4.18 & N/A \\
 & LUCIA & 27.49 & 89.47 & 4.23 & \textbf{0.990} & 95.45 & 1.69 & \textbf{0.998} & 98.09 & 0.68 & \textbf{1.000} \\
\rowcolor{Gray}\cellcolor{white} & MATE & 27.86 & 95.01 & 33.09 & 0.880 & 98.83 & 34.58 & 0.933 & 99.71 & 35.46 & 0.956 \\
\rowcolor{LightGreen}\cellcolor{white}\multirow{-4}{*}{SOMBRA} & \defensename & \textbf{1.27} & 86.51 & 1.61 & 0.965 & 93.26 & 1.61 & 0.987 & 94.13 & 1.47 & 0.992 \\
\bottomrule
\end{tabular}
\caption{\defensename results under intermediate-fusion removal across attack durations. TPR and FPR are percentages. AUC denotes full ROC-AUC. Thresholds are fixed across attack types and target counts: \defensename 0.1, MATE 0.8, LUCIA 0.1, and ROBOSAC consistency 0.3. ROBOSAC provides discrete operating points, so full ROC-AUC is not applicable.}
\label{tab:intermediate_duration_removal}
\end{table}

\begin{table}[p]
\centering
\scriptsize
\setlength{\tabcolsep}{2pt}
\renewcommand{\arraystretch}{0.92}
\begin{tabular}{@{}llrrrrrrrrrr@{}}
\toprule
\multirow{3}{*}{Attack} & \multirow{3}{*}{Method} & \multirow{2}{*}{Pre-attack} & \multicolumn{9}{c}{End-attack} \\
\cmidrule(lr){4-12} & & & \multicolumn{3}{c}{1 object} & \multicolumn{3}{c}{2 objects} & \multicolumn{3}{c}{3 objects} \\
\cmidrule(lr){4-6}\cmidrule(lr){7-9}\cmidrule(lr){10-12} & & FPR $\downarrow$ & TPR $\uparrow$ & FPR $\downarrow$ & AUC $\uparrow$ & TPR $\uparrow$ & FPR $\downarrow$ & AUC $\uparrow$ & TPR $\uparrow$ & FPR $\downarrow$ & AUC $\uparrow$ \\
\midrule
\multicolumn{12}{l}{\textit{5-frame attack}} \\
\rowcolor{Gray}\cellcolor{white} & ROBOSAC & \textbf{0.00} & 1.08 & 1.10 & N/A & 1.15 & 1.08 & N/A & 1.15 & 1.08 & N/A \\
 & LUCIA & 24.87 & 24.50 & 28.00 & 0.465 & 24.64 & 27.86 & 0.470 & 25.36 & 27.86 & 0.474 \\
\rowcolor{Gray}\cellcolor{white} & MATE & 22.25 & 25.65 & 22.60 & 0.523 & 27.95 & 23.39 & 0.533 & 28.53 & 25.34 & 0.540 \\
\rowcolor{LightGreen}\cellcolor{white}\multirow{-4}{*}{PGD} & \defensename & \textbf{0.00} & 54.76 & 1.15 & \textbf{0.826} & 77.23 & 1.01 & \textbf{0.925} & 84.15 & 1.01 & \textbf{0.958} \\
\cmidrule(lr){1-12}
\rowcolor{Gray}\cellcolor{white} & ROBOSAC & \textbf{0.00} & 1.15 & 1.10 & N/A & 1.30 & 1.10 & N/A & 1.51 & 1.08 & N/A \\
 & LUCIA & 24.87 & 25.79 & 27.31 & 0.487 & 27.43 & 26.80 & 0.509 & 29.01 & 25.96 & 0.530 \\
\rowcolor{Gray}\cellcolor{white} & MATE & 22.25 & 30.26 & 22.81 & 0.570 & 39.19 & 24.35 & 0.613 & 47.26 & 26.13 & 0.655 \\
\rowcolor{LightGreen}\cellcolor{white}\multirow{-4}{*}{DF} & \defensename & \textbf{0.00} & 83.00 & 1.15 & \textbf{0.973} & 95.10 & 1.01 & \textbf{0.991} & 96.83 & 0.86 & \textbf{0.996} \\
\midrule
\multicolumn{12}{l}{\textit{7-frame attack}} \\
\rowcolor{Gray}\cellcolor{white} & ROBOSAC & \textbf{0.00} & 1.59 & 1.61 & N/A & 1.59 & 1.61 & N/A & 1.59 & 1.61 & N/A \\
 & LUCIA & 24.87 & 24.21 & 28.07 & 0.466 & 24.64 & 27.79 & 0.471 & 24.64 & 27.47 & 0.475 \\
\rowcolor{Gray}\cellcolor{white} & MATE & 22.25 & 24.21 & 21.30 & 0.545 & 28.82 & 22.43 & 0.560 & 33.43 & 24.62 & 0.584 \\
\rowcolor{LightGreen}\cellcolor{white}\multirow{-4}{*}{PGD} & \defensename & \textbf{0.00} & 58.50 & 0.43 & \textbf{0.853} & 75.50 & 0.43 & \textbf{0.933} & 85.59 & 0.43 & \textbf{0.969} \\
\cmidrule(lr){1-12}
\rowcolor{Gray}\cellcolor{white} & ROBOSAC & \textbf{0.00} & 1.59 & 1.63 & N/A & 1.59 & 1.66 & N/A & 1.59 & 1.66 & N/A \\
 & LUCIA & 24.87 & 25.50 & 26.99 & 0.489 & 27.52 & 26.20 & 0.511 & 30.02 & 25.50 & 0.532 \\
\rowcolor{Gray}\cellcolor{white} & MATE & 22.25 & 31.70 & 21.47 & 0.587 & 41.50 & 24.50 & 0.643 & 49.86 & 26.44 & 0.692 \\
\rowcolor{LightGreen}\cellcolor{white}\multirow{-4}{*}{DF} & \defensename & \textbf{0.00} & 85.88 & 0.43 & \textbf{0.979} & 96.54 & 0.43 & \textbf{0.994} & 97.98 & 0.43 & \textbf{0.998} \\
\midrule
\multicolumn{12}{l}{\textit{10-frame attack}} \\
\rowcolor{Gray}\cellcolor{white} & ROBOSAC & \textbf{0.00} & 1.59 & 1.63 & N/A & 1.59 & 1.61 & N/A & 1.59 & 1.61 & N/A \\
 & LUCIA & 24.87 & 24.78 & 28.24 & 0.468 & 24.93 & 27.95 & 0.473 & 25.36 & 27.95 & 0.477 \\
\rowcolor{Gray}\cellcolor{white} & MATE & 22.25 & 27.09 & 21.13 & 0.575 & 35.45 & 23.01 & 0.616 & 42.94 & 23.90 & 0.652 \\
\rowcolor{LightGreen}\cellcolor{white}\multirow{-4}{*}{PGD} & \defensename & \textbf{0.00} & 55.62 & 1.44 & \textbf{0.819} & 76.66 & 1.44 & \textbf{0.910} & 85.01 & 1.44 & \textbf{0.947} \\
\cmidrule(lr){1-12}
\rowcolor{Gray}\cellcolor{white} & ROBOSAC & \textbf{0.00} & 1.87 & 1.87 & N/A & 1.87 & 1.87 & N/A & 1.87 & 1.87 & N/A \\
 & LUCIA & 24.87 & 27.09 & 27.38 & 0.491 & 29.32 & 26.22 & 0.513 & 31.20 & 25.29 & 0.533 \\
\rowcolor{Gray}\cellcolor{white} & MATE & 22.25 & 34.29 & 20.70 & 0.629 & 46.40 & 22.72 & 0.693 & 61.10 & 25.34 & 0.760 \\
\rowcolor{LightGreen}\cellcolor{white}\multirow{-4}{*}{DF} & \defensename & \textbf{0.00} & 87.61 & 1.44 & \textbf{0.966} & 96.25 & 1.30 & \textbf{0.989} & 98.27 & 1.30 & \textbf{0.993} \\
\midrule
\multicolumn{12}{l}{\textit{20-frame attack}} \\
\rowcolor{Gray}\cellcolor{white} & ROBOSAC & \textbf{0.00} & 2.45 & 2.45 & N/A & 2.52 & 2.45 & N/A & 2.67 & 2.47 & N/A \\
 & LUCIA & 24.87 & 23.63 & 28.31 & 0.471 & 24.35 & 28.17 & 0.475 & 24.35 & 28.17 & 0.479 \\
\rowcolor{Gray}\cellcolor{white} & MATE & 22.25 & 45.82 & 27.52 & 0.636 & 56.48 & 28.82 & 0.701 & 65.99 & 30.86 & 0.748 \\
\rowcolor{LightGreen}\cellcolor{white}\multirow{-4}{*}{PGD} & \defensename & \textbf{0.00} & 64.27 & 1.66 & \textbf{0.888} & 81.84 & 1.66 & \textbf{0.958} & 91.07 & 1.66 & \textbf{0.985} \\
\cmidrule(lr){1-12}
\rowcolor{Gray}\cellcolor{white} & ROBOSAC & \textbf{0.00} & 2.45 & 2.45 & N/A & 2.67 & 2.50 & N/A & 3.03 & 2.45 & N/A \\
 & LUCIA & 24.87 & 26.22 & 26.90 & 0.496 & 28.60 & 25.91 & 0.517 & 30.79 & 25.24 & 0.538 \\
\rowcolor{Gray}\cellcolor{white} & MATE & 22.25 & 46.11 & 27.74 & 0.669 & 65.71 & 28.89 & 0.769 & 78.67 & 30.91 & 0.835 \\
\rowcolor{LightGreen}\cellcolor{white}\multirow{-4}{*}{DF} & \defensename & \textbf{0.00} & 92.51 & 1.66 & \textbf{0.990} & 98.27 & 1.51 & \textbf{0.997} & 99.14 & 1.37 & \textbf{0.997} \\
\bottomrule
\end{tabular}
\caption{\defensename results under intermediate-fusion spoofing across attack durations. TPR and FPR are percentages. AUC denotes full ROC-AUC. Thresholds are fixed across attack types and target counts: \defensename 0.1, MATE 0.8, LUCIA 0.1, and ROBOSAC consistency 0.3. ROBOSAC provides discrete operating points, so full ROC-AUC is not applicable.}
\label{tab:intermediate_duration_spoofing}
\end{table}

\section{Additional Results and Operating Trade-offs}
\label{sec:additional_results}

\paragraph{Separation across operating points.} \defensename has higher ROC-AUC than MATE under every evaluated attack family and target count, with particularly large gains on late-fusion spoofing (Table~\ref{tab:main_comparison}). Its low-FPR detection also improves under PGD and DF, while LUCIA achieves the strongest separation against SOMBRA (Figures~\ref{fig:late_roc_low} and~\ref{fig:intermediate_roc_low}).

\begin{figure}[htbp]
\centering
\includegraphics[width=0.85\linewidth]{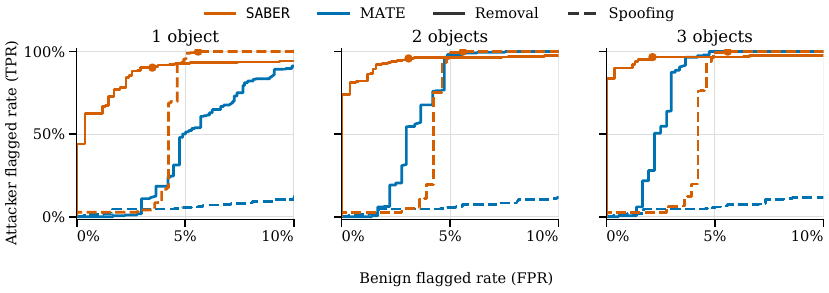}
\caption{Late-fusion ROC in the 0--10\% FPR region.}
\label{fig:late_roc_low}
\end{figure}

\begin{figure}[htbp]
\centering
\includegraphics[width=\linewidth]{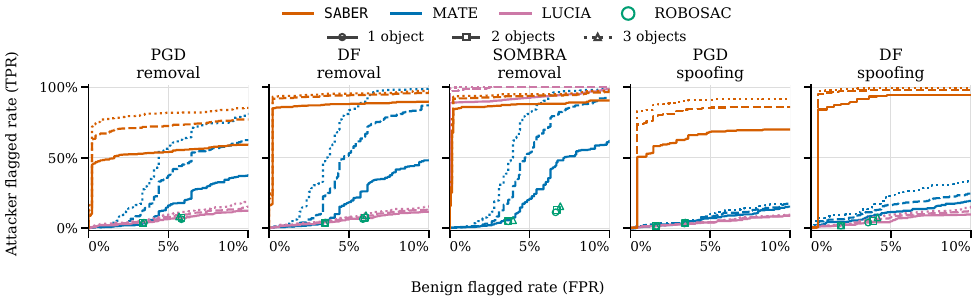}
\caption{Intermediate-fusion ROC in the 0--10\% FPR region.}
\label{fig:intermediate_roc_low}
\end{figure}

\FloatBarrier
\section{Theoretical Analysis and Proofs}
\label{sec:theory_statements}

Following Section~\ref{subsec:theoretical_properties}, we first establish protection against cross-object dilution, then analyze trust response through weighted evidence, memory, and paired clean--attack comparisons, and finally derive attacker--benign separation and false-alarm bounds. The attack endpoint is fixed at $t_\star=t_e$ before observing outcomes. All evidence sums include admitted memory and temporal discounts. Unless time is shown explicitly, scores and evidence are evaluated at $t_\star$. The results apply to the report histories in either fusion setting under their stated assumptions.

\subsection{Accumulated evidence and dilution}
\label{subsec:accumulation}

We establish the pair-isolation property and contrast it with pooled assessment. Using the admission set defined in Appendix~\ref{subsec:memory}, define the pre-amplification evidence
\begin{equation}
 U_{ij}=\sum_{\ell\in\mathcal E_{ij}(t_\star)}D_{\ell,t_\star}g(q_{j,-i}^{\ell})\phi_+(v_{ij}^{\ell}),\qquad
 V_{ij}=\sum_{\ell\in\mathcal E_{ij}(t_\star)}D_{\ell,t_\star}g(q_{j,-i}^{\ell})\phi_-(v_{ij}^{\ell}).
 \label{eq:evidence_sums}
\end{equation}
Here $U_{ij}$ and $V_{ij}$ accumulate the pair's admitted positive and negative evidence, including memory and temporal discounts but excluding the multiplier $\lambda$. Subtracting the fixed initial parameters from Eq.~(\ref{eq:propagation}) leaves a discounted deviation plus admitted increments. Historical evidence released at $s$ has total discount $D_{\ell,s}D_{s,t_\star}=D_{\ell,t_\star}$. Induction gives $\alpha_{ij}=\kappa\mu_i+\lambda U_{ij}$ and $\beta_{ij}=\kappa(1-\mu_i)+\lambda V_{ij}$ without double-counting. The resulting pair trust at the checkpoint is
\begin{equation}
 B_{ij}^{t_\star}=\frac{\kappa\mu_i+\lambda U_{ij}}{\kappa+\lambda(U_{ij}+V_{ij})},
 \label{eq:unrolled}
\end{equation}

\begin{proof}[Proof of Proposition~\ref{prop:isolation_summary}]
Under the fixed histories in the proposition, Eq.~(\ref{eq:evidence_sums}) contains only the assessed pair's evidence. Direct updates for other tracks cannot change its mean. Taking the minimum in Eq.~(\ref{eq:pair_aggregate}) gives $T_{2,i}\le B_{ij}$ while $j\in\mathcal A_i^{t_\star}$. Equation~(\ref{eq:combine}) also gives $R_i\le B_{ij}$.
\end{proof}

\begin{proposition}[Cross-object dilution under pooled assessment]
\label{prop:dilution}
A pooled Beta state with initial parameters $(A_0,B_0)>0$, disputed evidence $(x,y)\ge0$, and $m$ unrelated evidence blocks $(b_+,b_-)\ge0$ has mean
\begin{equation}
 T_{\mathrm{pool}}(m)=\frac{A_0+x+mb_+}{A_0+B_0+x+y+m(b_++b_-)}.
 \label{eq:background}
\end{equation}
If $b_++b_->0$ and $b_+/(b_++b_-)>T_{\mathrm{pool}}(0)$, this mean increases with $m$ and converges to $b_+/(b_++b_-)$.
\end{proposition}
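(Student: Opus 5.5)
The plan is to treat the pooled mean as a linear-fractional function of $m$ and read off monotonicity and the limit from its derivative. Write $P=A_0+x$, $Q=A_0+B_0+x+y$, $c=b_+$ and $s=b_++b_-$. Equation~(\ref{eq:background}) then becomes $T_{\mathrm{pool}}(m)=(P+mc)/(Q+ms)$. This is the standard Beta mean obtained by adding the disputed increments $(x,y)$ and $m$ copies of the unrelated block $(b_+,b_-)$ to the prior $(A_0,B_0)$. First I will note that the formula itself needs no argument beyond additivity of Beta pseudo-counts: each block adds $b_+$ to the positive parameter and $b_-$ to the negative one, exactly as the increments do in Eq.~(\ref{eq:tier2_update}). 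Since $A_0,B_0>0$ and all increments are nonnegative, $Q>0$ and $Q+ms>0$ for every $m\ge0$, so the mean is well defined.

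For monotonicity I will extend $m$ to a real variable $m\ge0$ and differentiate:
\begin{equation*}
\frac{dT_{\mathrm{pool}}}{dm}=\frac{cQ-sP}{(Q+ms)^2}.
\end{equation*}
The sign is that of $cQ-sP=sQ\,[c/s-P/Q]$, and $P/Q=T_{\mathrm{pool}}(0)$. The hypothesis $b_+/(b_++b_-)>T_{\mathrm{pool}}(0)$ together with $s>0$ makes the derivative strictly positive, so the mean is strictly increasing on integers $m$. The same computation parallels Eq.~(\ref{eq:weight_effect}), with $w$ replaced by $m$ and the sign reversed. Alternatively, I can show it without calculus via the identity
\begin{equation*}
T_{\mathrm{pool}}(m+1)-T_{\mathrm{pool}}(m)=\frac{s\,[c/s-T_{\mathrm{pool}}(m)]}{Q+(m+1)s}.
\end{equation*}
I will then argue by induction that $T_{\mathrm{pool}}(m)<c/s$ for all $m$. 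This holds because each mean is a mediant of $P/Q$ and copies of $c/s$, hence lies strictly between them.

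For convergence, dividing the numerator and denominator by $m$ gives $(P/m+c)/(Q/m+s)\to c/s$ as $m\to\infty$, since $s>0$. I do not expect a genuine obstacle here. The only care points are, first, that $s>0$ is needed both for the sign factorization and for the limit. Second, I should keep the limit statement correct regardless of the ordering hypothesis, which is only required for monotonicity. I will close with the interpretation: the limit does not depend on the disputed evidence $(x,y)$, which is the cross-object dilution the proposition is meant to exhibit.
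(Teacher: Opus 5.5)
Your proof is correct and follows the paper's argument. Differentiating $(P+mb_+)/(Q+m(b_++b_-))$ over the real extension of $m$ yields the same numerator $b_+Q-(b_++b_-)P=b_+(B_0+y)-b_-(A_0+x)$ as the paper, your factorization $(b_++b_-)Q\,[b_+/(b_++b_-)-T_{\mathrm{pool}}(0)]$ makes its positivity explicit, and dividing through by $m$ gives the limit; the discrete increment identity with the mediant bound is a valid calculus-free alternative but is not needed.
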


\begin{proof}
Differentiation in the nonnegative real extension of $m$ gives numerator $b_+(B_0+y)-b_-(A_0+x)$ over a positive squared denominator. The numerator is positive under the stated condition, so the result also holds for integer block counts. Dividing numerator and denominator by $m$ gives the limit.
\end{proof}

The comparison isolates direct cancellation: unrelated evidence raises pooled trust but cannot replenish an assessed pair's state. Other reports may still affect the peer reference or initial mean, which are held fixed in Proposition~\ref{prop:isolation_summary}. Minimum aggregation does not add penalties across attacked objects, so several equally contradicted pairs can yield the same agent score.

\subsection{Weighted evidence and detection by the attack endpoint}
\label{subsec:finite_detection}

For $0<\eta<1$, rearranging Eq.~(\ref{eq:unrolled}) gives the exact pair-level crossing condition
\begin{equation}
 B_{ij}<\eta
 \quad\Longleftrightarrow\quad
 \lambda\big[\eta V_{ij}-(1-\eta)U_{ij}\big]
 >\kappa(\mu_i-\eta).
 \label{eq:pair_crossing}
\end{equation}
For an assessed pair this is sufficient to flag its agent, since $R_i\le B_{ij}$. The left side is the pair's net evidence toward crossing the threshold. The right side is the resistance from its initial state. Neither contains direct evidence from unrelated tracks.

For $U_{ij}+V_{ij}>0$, hold initialization and the reference, screening, admission, discount, and assessment histories fixed. Increasing $\lambda$ then lowers $B_{ij}$ exactly when $U_{ij}/(U_{ij}+V_{ij})<\mu_i$, by differentiation as in Eq.~(\ref{eq:weight_effect}), and
\begin{equation}
 \lim_{\lambda\to\infty}B_{ij}
 =\frac{U_{ij}}{U_{ij}+V_{ij}}.
 \label{eq:weight_limit}
\end{equation}
Thus amplification increases the influence of supported contradictions, but cannot create a more negative evidence ratio. If both the initial mean and this ratio are at least $\eta$, no finite weight makes the pair cross below it. With no evidence, $B_{ij}=\mu_i$ regardless of $\lambda$.

To connect Eq.~(\ref{eq:pair_crossing}) to persistent errors, consider admitted observations with $0\le v\le v_a<\min(\theta,1/2)$. Binary reports imply $g(q)=|2v-1|$, so valid increment bounds are
\begin{equation}
 g(q)\phi_+(v)\le x_a=\frac{\nu v_a}{\nu+1},\qquad
 g(q)\phi_-(v)\ge y_a=\frac{(1-2v_a)\nu^2(1-v_a)}{\nu+1}.
 \label{eq:attack_increments}
\end{equation}
These use the tier-2 bias parameters. They follow from $g(q)\le1$ and $g(q)=1-2v\ge1-2v_a$ on this interval.

\begin{proposition}[Finite-time detection from one contradicted pair]
\label{prop:finite_detection}
Suppose an attacker pair $(a,j)$ remains assessed at $t_\star$. Let $\mathcal C\subseteq\mathcal E_{aj}(t_\star)$ be admitted observations satisfying the consistency condition above, with effective length $L=\sum_{\ell\in\mathcal C}D_{\ell,t_\star}>0$. Let $U_0,V_0$ be the discounted pre-amplification evidence from all other admitted observations of this pair. Then $R_a<\eta$ if
\begin{equation}
 \lambda\big\{\eta(V_0+Ly_a)
 -(1-\eta)(U_0+Lx_a)\big\}
 >\kappa(\mu_a-\eta).
 \label{eq:finite_detection}
\end{equation}
\end{proposition}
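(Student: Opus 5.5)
The plan is to reduce the statement to the exact pair-level crossing condition in Eq.~(\ref{eq:pair_crossing}) and then lower-bound its left side using the per-observation increment bounds in Eq.~(\ref{eq:attack_increments}). Because $(a,j)$ remains assessed at $t_\star$, Proposition~\ref{prop:isolation_summary} together with Eqs.~(\ref{eq:pair_aggregate}) and~(\ref{eq:combine}) gives $R_a\le T_{2,a}\le B_{aj}$. It therefore suffices to show $B_{aj}<\eta$. By Eq.~(\ref{eq:pair_crossing}), this is equivalent to
\[
\lambda\big[\eta V_{aj}-(1-\eta)U_{aj}\big]>\kappa(\mu_a-\eta).
\]

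First, I will split the admitted set $\mathcal E_{aj}(t_\star)$ into $\mathcal C$ and its complement. By Eq.~(\ref{eq:evidence_sums}), this gives $U_{aj}=U_0+U_{\mathcal C}$ and $V_{aj}=V_0+V_{\mathcal C}$, where
\[
U_{\mathcal C}=\sum_{\ell\in\mathcal C}D_{\ell,t_\star}g(q^\ell)\phi_+(v^\ell),
\]
and $V_{\mathcal C}$ is defined analogously with $\phi_-$. For each $\ell\in\mathcal C$, the consistency satisfies $v^\ell\le v_a<\min(\theta,1/2)$. Since $v^\ell<\theta$, the negative-bias branch $n=\nu$ applies. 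With binary reports, $g(q)=1-2v^\ell\in[1-2v_a,1]$. Hence Eq.~(\ref{eq:attack_increments}) gives
\[
g\phi_+\le x_a,\qquad g\phi_-\ge y_a.
\]
I will briefly verify both bounds. The first follows from $\frac{\nu}{\nu+1}v\le\frac{\nu v_a}{\nu+1}$. The second follows because $(1-2v)(1-v)$ is decreasing in $v$ on $[0,1/2)$. Since the discounts $D_{\ell,t_\star}\ge0$, summing over $\mathcal C$ yields
\[
U_{\mathcal C}\le Lx_a,\qquad V_{\mathcal C}\ge Ly_a.
\]

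Next, I will use monotonicity. The quantity $\eta V-(1-\eta)U$ is increasing in $V$ and decreasing in $U$, because $0<\eta<1$. Since $\lambda>0$, it follows that
\[
\lambda\big[\eta V_{aj}-(1-\eta)U_{aj}\big]\ge\lambda\big\{\eta(V_0+Ly_a)-(1-\eta)(U_0+Lx_a)\big\}.
\]
By hypothesis~(\ref{eq:finite_detection}), the right side exceeds $\kappa(\mu_a-\eta)$. Therefore the crossing condition holds, so $B_{aj}<\eta$, and hence $R_a<\eta$.

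The main obstacle is justifying the unrolled form~(\ref{eq:unrolled}) that underlies Eq.~(\ref{eq:pair_crossing}). This form requires that memory releases $M$ and ordinary increments are each counted once, with combined discount $D_{\ell,t_\star}$. The paper establishes this by induction, together with the disjointness condition $\ell\notin\mathcal E$ in Eq.~(\ref{eq:memory_frames}), so I will cite it rather than reprove it. A secondary point to check is that $g(q)=|2v-1|$ holds exactly for binary $d$, since $v$ equals either $q$ or $1-q$.
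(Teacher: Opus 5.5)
Your proposal is correct and follows the paper's proof. You bound $U_{aj}\le U_0+Lx_a$ and $V_{aj}\ge V_0+Ly_a$ from the increment bounds in Eq.~(\ref{eq:attack_increments}), substitute them into the crossing condition Eq.~(\ref{eq:pair_crossing}) via monotonicity of $\eta V-(1-\eta)U$, and pass from $B_{aj}<\eta$ to $R_a<\eta$ through assessment membership; your explicit checks (the $n=\nu$ branch, $g=1-2v$ for binary reports, nonnegative discounts) only spell out steps the paper leaves implicit.
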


\begin{proof}
Equation~(\ref{eq:attack_increments}) yields $U_{aj}\le U_0+Lx_a$ and $V_{aj}\ge V_0+Ly_a$. Substituting these bounds into the left side of Eq.~(\ref{eq:pair_crossing}) proves $B_{aj}<\eta$. Assessment membership then gives $R_a\le B_{aj}<\eta$.
\end{proof}

When $\eta y_a-(1-\eta)x_a>0$, Eq.~(\ref{eq:finite_detection}) can be written directly as a sufficient effective-evidence requirement:
\begin{equation}
 L>
 \frac{\kappa(\mu_a-\eta)/\lambda-\eta V_0+(1-\eta)U_0}
 {\eta y_a-(1-\eta)x_a}.
 \label{eq:effective_evidence_requirement}
\end{equation}
The denominator is a lower bound on net threshold-crossing evidence per effective observation. The numerator accounts for the prior and the pair's other admitted evidence. Thus the condition specifies how much supported contradiction must be admitted by the attack endpoint, rather than requiring an unspecified duration of persistent errors.

Screening determines when evidence can first be admitted. For a continuously eligible attack producing an event at every frame from $t_s$, starting with zero CUSUM, the trigger is $s=t_s+r_*-1$, with $r_*$ from Eq.~(\ref{eq:screen_delay}). If $s>t_\star$, this attack interval alone cannot activate tier 2. Otherwise, suppose the pair remains assessed from $s$ through $t_\star$, every observation satisfies the consistency bound, and $\rho$ is constant. Their effective length is $\sum_{k=0}^{t_\star-s}(1-\rho)^k$. Memory adds the discounts of qualifying admitted pre-trigger observations, using their original references. Other observations remain in $U_0,V_0$. The relevant quantity in Eq.~(\ref{eq:effective_evidence_requirement}) is discounted evidence, not the raw number of attack frames. With constant $\rho>0$, even indefinitely many distinct observations have total effective length at most $1/\rho$, so persistence alone does not guarantee crossing.

\subsection{Effect of admitted memory}
\label{subsec:memory_effect}

The following result isolates how admitting cached evidence changes trust, without changing initialization or subsequent updates.

\begin{proposition}[Admitting history under fixed initialization and subsequent updates]
\label{prop:memory}
Hold a pair's initialization, propagation, and non-memory increments fixed. At a common checkpoint, let its no-memory state have mean $B_{\mathrm{base}}$ and mass $N>0$. If the additional historical increments remaining after discounting are $(M_+,M_-)$, with $M=M_++M_->0$, then
\begin{equation}
 B_{\mathrm{mem}}-B_{\mathrm{base}}=\frac{M}{N+M}\left(\frac{M_+}{M}-B_{\mathrm{base}}\right).
 \label{eq:memory_effect}
\end{equation}
\end{proposition}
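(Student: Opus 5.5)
The plan is to reduce the statement to an identity about adding pseudo-counts to a Beta state. The key input is the linear, induction-based unrolling already used for Eq.~(\ref{eq:unrolled}). Because propagation in Eq.~(\ref{eq:propagation}) is affine and pulls every state toward the same fixed prior $(\kappa\mu_i,\kappa(1-\mu_i))$, two runs of the same pair can be compared directly. These runs share initialization, propagation factors $\rho_t$, and non-memory increments, and differ only in whether memory terms $M_{ij,\pm}^s$ are admitted at activation. Their parameter difference then evolves by pure multiplicative discounting.

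First I will make this precise. Let $(\alpha^{\mathrm{base}},\beta^{\mathrm{base}})$ and $(\alpha^{\mathrm{mem}},\beta^{\mathrm{mem}})$ be the two states. At the activation frame $s$, their difference is exactly the released memory sum of Eq.~(\ref{eq:memory_release}). At every later frame the prior term $\rho_t\kappa\mu_i$ and the current increments $\Delta^t_{ij,\pm}$ cancel in the difference, so the difference is multiplied by $(1-\rho_t)$. By induction up to the checkpoint, the difference equals $\sum_{\ell} D_{\ell,s}D_{s,t_\star} w^\ell\phi_\pm(v^\ell) = \sum_\ell D_{\ell,t_\star}w^\ell\phi_\pm(v^\ell)$. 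These are precisely the "additional historical increments remaining after discounting," $(M_+,M_-)$. Hence $\alpha^{\mathrm{mem}}=\alpha^{\mathrm{base}}+M_+$ and $\beta^{\mathrm{mem}}=\beta^{\mathrm{base}}+M_-$. One point needs checking: the fixed non-memory increments are truly identical in both runs. This holds by hypothesis. It is also consistent with the admission rule, since $\mathcal H_{ij}(s)$ excludes frames already in $\mathcal E_{ij}(s^-)$ and the current frame $s$ itself, so nothing is double-counted.

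The second step is algebra. Write $N=\alpha^{\mathrm{base}}+\beta^{\mathrm{base}}$ and $\alpha^{\mathrm{base}}=NB_{\mathrm{base}}$. Then $B_{\mathrm{mem}}=(NB_{\mathrm{base}}+M_+)/(N+M)$. Subtracting $B_{\mathrm{base}}=B_{\mathrm{base}}(N+M)/(N+M)$ gives $(M_+-MB_{\mathrm{base}})/(N+M)$, which is Eq.~(\ref{eq:memory_effect}) after factoring out $M$. This is the same computation as Eq.~(\ref{eq:weight_effect}) with a finite increment. Positivity of $N$ and $M$ ensures that all denominators are nonzero. I would add the immediate corollary that memory lowers trust exactly when $M_+/M<B_{\mathrm{base}}$, and that the shift is bounded in magnitude by $M/(N+M)$.

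The main obstacle, though a mild one, is the first step. It requires verifying that the discount factors in the two runs coincide. They do, because $\rho_t$ depends only on elapsed time and not on the state. I also need to confirm that tier-2 membership and admission decisions are unaffected by the memory release. The hypothesis of "fixed subsequent updates" covers this, so I would state it explicitly as part of the assumption rather than derive it.
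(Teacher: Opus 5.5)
Your proposal is correct and follows the paper's proof. The paper likewise writes $B_{\mathrm{mem}}=(NB_{\mathrm{base}}+M_+)/(N+M)$, subtracts $B_{\mathrm{base}}$, and justifies the additive shift by noting that linear propagation preserves the parameter difference as the discounted historical increment; your explicit induction (the difference is multiplied by $(1-\rho_t)$ each frame, with $D_{\ell,s}D_{s,t_\star}=D_{\ell,t_\star}$) just spells out that last step in more detail.
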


\begin{proof}
The memory state has mean $(NB_{\mathrm{base}}+M_+)/(N+M)$. Subtract $B_{\mathrm{base}}$. Linear propagation preserves the parameter difference as the discounted historical increment.
\end{proof}

Memory lowers pair trust when the admitted history has a lower evidence-implied mean than the no-memory state. It lowers agent trust only when the changed pair affects the minimum in Eq.~(\ref{eq:combine}). This response can occur at activation without shortening the CUSUM delay. The proposition isolates history admission under fixed initialization. Runs with different initial means must retain their own initialization in the full comparison. Figure~\ref{fig:memory_ablation} evaluates the resulting response speed.

\subsection{Paired reduction relative to MATE}
\label{subsec:paired}

This comparison asks whether tier 2 strengthens the attack-induced trust decrease, rather than merely lowering scores on both clean and attacked inputs. We compare against MATE because tier 1 of \defensename refers to its design. This comparison isolates the additional attack-specific response introduced by selective pair-level assessment, rather than differences between unrelated trust estimators. Couple clean and attacked runs with the same sensor observations, visibility declarations, and initialization rules. Each run keeps its own resulting screening and trust history. Within either run, tier 1 equals MATE on those inputs. Let $P_i=T_{1,i}-R_i\ge0$ denote the additional reduction due to tier 2. For the selected agent, suppress its index and define
\begin{equation}
 D_{\mathrm M}=T_1^{\mathrm{clean}}-T_1^{\mathrm{attack}},\qquad
 D=R^{\mathrm{clean}}-R^{\mathrm{attack}},\qquad
 D-D_{\mathrm M}=P^{\mathrm{attack}}-P^{\mathrm{clean}}.
 \label{eq:paired_identity}
\end{equation}
Here $D$ and $D_{\mathrm M}$ are the attack-induced trust decreases for \defensename and MATE, respectively. The last equality follows by substitution and removes any extra reduction that tier 2 already causes on the paired clean input.

\paragraph{Sufficient evidence bounds.} We first bound the extra reduction on the attacked input from below and its cost on the paired clean input from above. For an assessed attacker pair, suppose $\mu_a\le\overline\mu_a\in(0,1)$, $U_{aj}\le\overline U_a$, $V_{aj}\ge\underline V_a$, and $T_{1,a}\ge m_a$, with nonnegative evidence bounds and $0<c<m_a<1$. The condition
\begin{equation}
 \lambda[(m_a-c)\underline V_a-(1-m_a+c)\overline U_a]
 \ge\kappa(\overline\mu_a-m_a+c)
 \label{eq:attack_condition}
\end{equation}
ensures $B_{aj}\le m_a-c$ by Eq.~(\ref{eq:unrolled}), hence $P_a\ge c$. If these conditions hold with probability at least $1-\xi$, they give $\Pr(P^{\mathrm{attack}}\ge c)\ge1-\xi$. Failures of assessment membership, informative evidence, or the initial/baseline bounds are included in $\xi$.

For a clean run with nonempty assessment, write $\mu=\mu_i$ and define $C_{\max}=\max_{j\in\mathcal A_i^{t_\star}}[\mu(U_{ij}+V_{ij})-U_{ij}]_+$, where $[z]_+=\max(0,z)$. Equations~(\ref{eq:unrolled}) and~(\ref{eq:combine}) give
\begin{equation}
 P^{\mathrm{clean}}\le\min\{1,[T_1-\mu]_++\lambda C_{\max}/\kappa\}.
 \label{eq:clean_cost}
\end{equation}
To verify this, use $[\mu-B_{ij}]_+\le\lambda[\mu(U_{ij}+V_{ij})-U_{ij}]_+/\kappa$ and $\mu-T_2=\max_{j\in\mathcal A_i^{t_\star}}[\mu-B_{ij}]_+$. Conditional on nonempty clean assessment, assume $[T_1-\mu]_+\le d_0$ and $C_{\max}\le c_0$ jointly with probability at least $1-\zeta$, where $d_0,c_0\ge0$ and $\zeta\in[0,1]$. Then $C_{\mathrm{clean}}=(1-\zeta)\min\{1,d_0+\lambda c_0/\kappa\}+\zeta$ is a valid bound. These sums include memory, so the clean-side effect of weighting and history admission is not omitted.

\paragraph{From evidence bounds to paired reduction.} The attack-side evidence condition gives $c$, while the clean-side bound gives $C_{\mathrm{clean}}$. The following theorem combines these with their occurrence probabilities.

\begin{theorem}[Larger paired trust reduction]
\label{thm:paired}
Suppose the clean assessment set is nonempty at $t_\star$ with probability at most $\gamma$. Conditional on this event, when it has positive probability, assume the expected additional reduction is at most $C_{\mathrm{clean}}\in[0,1]$. Suppose $P^{\mathrm{attack}}\ge c>0$ with probability at least $1-\xi$. For $\gamma,\xi\in[0,1]$,
\begin{equation}
 \mathbb E[D-D_{\mathrm M}]
 \ge(1-\xi)c-\gamma C_{\mathrm{clean}}.
 \label{eq:paired_bound}
\end{equation}
Thus $(1-\xi)c>\gamma C_{\mathrm{clean}}$ suffices for a larger expected attack-induced reduction than MATE.
\end{theorem}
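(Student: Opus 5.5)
The argument is a direct expectation bound built on the identity in Eq.~(\ref{eq:paired_identity}). That identity gives $\mathbb E[D-D_{\mathrm M}]=\mathbb E[P^{\mathrm{attack}}]-\mathbb E[P^{\mathrm{clean}}]$, so it suffices to bound the attacked-run term from below and the clean-run term from above separately. Both expectations exist because $0\le P_i=T_{1,i}-R_i\le 1$. Nonnegativity follows from Eq.~(\ref{eq:combine}), since $R_i\le T_{1,i}$, and both quantities lie in $[0,1]$.

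\textbf{Attack-side bound.} Since $P^{\mathrm{attack}}\ge0$, Markov-type truncation gives
\[
\mathbb E[P^{\mathrm{attack}}]\ge c\Pr(P^{\mathrm{attack}}\ge c)\ge(1-\xi)c .
\]
The assumption $\Pr(P^{\mathrm{attack}}\ge c)\ge 1-\xi$ can be supplied by the sufficient condition Eq.~(\ref{eq:attack_condition}), but the theorem only needs it as stated.

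\textbf{Clean-side bound.} Let $E$ be the event that the clean assessment set $\mathcal A_i^{t_\star}$ is nonempty. On $E^c$, Eq.~(\ref{eq:combine}) gives $R^{\mathrm{clean}}=T_1^{\mathrm{clean}}$, so $P^{\mathrm{clean}}=0$. Hence
\[
\mathbb E[P^{\mathrm{clean}}]=\Pr(E)\,\mathbb E[P^{\mathrm{clean}}\mid E]\le\gamma\, C_{\mathrm{clean}} .
\]
If $\Pr(E)=0$, the term vanishes and the conditional expectation need not be defined. Otherwise we use the conditional bound $C_{\mathrm{clean}}$, which is valid for instance via Eq.~(\ref{eq:clean_cost}). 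Both factors are nonnegative, so replacing each by its upper bound preserves the inequality.

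\textbf{Combining.} Subtracting the two bounds yields Eq.~(\ref{eq:paired_bound}). The final sufficient condition follows immediately: if $(1-\xi)c>\gamma C_{\mathrm{clean}}$, then $\mathbb E[D]>\mathbb E[D_{\mathrm M}]$.

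\textbf{Main obstacle.} The delicate point is not algebraic. It is making sure the coupling is legitimate: both runs must live on a common probability space with shared sensor inputs, and tier 1 must equal MATE in each run (stated in Appendix~\ref{subsec:beta}). This is what allows $T_1$ to be identified with MATE's score, so that the identity holds pathwise rather than only in distribution. I would state this explicitly before taking expectations. I would also note that $c$ and $C_{\mathrm{clean}}$ need not be independent events, because linearity of expectation requires no independence.
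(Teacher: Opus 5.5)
Your proposal is correct and follows the paper's proof step for step: the identity in Eq.~(\ref{eq:paired_identity}), the bound $\mathbb E[P^{\mathrm{attack}}]\ge(1-\xi)c$ from nonnegativity, and $P^{\mathrm{clean}}=0$ off the nonempty-assessment event, which gives $\mathbb E[P^{\mathrm{clean}}]\le\gamma C_{\mathrm{clean}}$. Your coupling remark is less delicate than you suggest: the inequality itself needs only linearity of expectation, and the paper likewise notes that no independence or equal activation histories are required; the identification of tier 1 with MATE matters only for interpreting $D_{\mathrm M}$ as MATE's decrease.
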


\begin{proof}
Nonnegativity gives $\mathbb E[P^{\mathrm{attack}}]\ge(1-\xi)c$. Without assessed pairs, $P^{\mathrm{clean}}=0$, so $\mathbb E[P^{\mathrm{clean}}]\le\gamma C_{\mathrm{clean}}$. Apply Eq.~(\ref{eq:paired_identity}). No independence or equal activation histories are required.
\end{proof}

\subsection{Attacker--benign separation}
\label{subsec:separation}

We now compare the attacker with a benign peer in the same attacked scene. Write $f(\mu,U,V)=(\kappa\mu+\lambda U)/[\kappa+\lambda(U+V)]$. For $\kappa>0$, $\lambda\ge1$, $0<\mu<1$, and nonnegative evidence, it increases with $\mu,U$ and decreases with $V$.

\begin{theorem}[Separation from one contradicted pair]
\label{thm:separation}
At $t_\star$, suppose attacker $a$ has an assessed pair with $\mu_a\le\overline\mu_a$, $U_{aj}\le\overline U_a$, and $V_{aj}\ge\underline V_a$. For benign peer $b$, suppose $T_{1,b}\ge m_b$ and, whenever its assessment set is nonempty, $\mu_b\ge\underline\mu_b$, $U_{bj}\ge\underline U_b$, and $V_{bj}\le\overline V_b$ for every assessed pair. Assume $m_b,\overline\mu_a,\underline\mu_b\in(0,1)$ and nonnegative evidence bounds. Set
\begin{equation}
 u_a=f(\overline\mu_a,\overline U_a,\underline V_a),\qquad
 \ell_b=\min\{m_b,\underline\mu_b,f(\underline\mu_b,\underline U_b,\overline V_b)\}.
 \label{eq:separation_bounds}
\end{equation}
Then $R_a\le u_a$, $R_b\ge\ell_b$, and $R_b-R_a\ge\ell_b-u_a$. If the attacker and joint benign conditions fail with probabilities at most $\xi_a,\xi_b\in[0,1]$, the gap bound holds with probability at least $1-\xi_a-\xi_b$.
\end{theorem}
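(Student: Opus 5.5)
The plan is to bound the attacker's score from above and the benign peer's score from below separately, then subtract. Both bounds go through the unrolled pair mean in Eq.~(\ref{eq:unrolled}), which equals $f(\mu_i,U_{ij},V_{ij})$, together with the monotonicity of $f$ recorded just before the theorem.

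I will first verify that monotonicity, since everything rests on it. The partial derivatives are:
\[
\partial_\mu f=\frac{\kappa}{\kappa+\lambda(U+V)},\qquad
\partial_U f=\frac{\lambda[\kappa+\lambda V-\kappa\mu]}{[\kappa+\lambda(U+V)]^2},\qquad
\partial_V f=-\frac{\lambda(\kappa\mu+\lambda U)}{[\kappa+\lambda(U+V)]^2}.
\]
For $0<\mu<1$ and nonnegative evidence, the first two are positive and the third is nonpositive. Because $f$ is coordinatewise monotone on the box of admissible arguments, bounds on each argument transfer to bounds on $f$.

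\textbf{Attacker side.} The assessed attacker pair has $B_{aj}=f(\mu_a,U_{aj},V_{aj})$. Monotonicity with $\mu_a\le\overline\mu_a$, $U_{aj}\le\overline U_a$ and $V_{aj}\ge\underline V_a$ gives $B_{aj}\le u_a$. Proposition~\ref{prop:isolation_summary} applies because the pair is in $\mathcal A_a^{t_\star}$, so $R_a\le T_{2,a}\le B_{aj}\le u_a$ by Eqs.~(\ref{eq:pair_aggregate}) and~(\ref{eq:combine}).

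\textbf{Benign side.} I split on whether $\mathcal A_b^{t_\star}$ is empty.
\begin{itemize}
\item If it is empty, then $R_b=T_{1,b}\ge m_b\ge\ell_b$.
\item If it is nonempty, then $R_b=\min\{T_{1,b},\mu_b,\min_j B_{bj}\}$. Each term is bounded below: $T_{1,b}\ge m_b$, $\mu_b\ge\underline\mu_b$, and $B_{bj}\ge f(\underline\mu_b,\underline U_b,\overline V_b)$ by monotonicity. The last bound is uniform over all assessed $j$, since the hypotheses hold for every assessed pair. A minimum of terms each bounded below by a member of the set in Eq.~(\ref{eq:separation_bounds}) is at least the minimum of that set, so $R_b\ge\ell_b$.
\end{itemize}
Subtracting the two bounds gives $R_b-R_a\ge\ell_b-u_a$.

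\textbf{Probabilistic statement.} The deterministic argument holds on the intersection of the attacker event and the joint benign event. By a union bound on the complements, this intersection has probability at least $1-\xi_a-\xi_b$. The gap bound holds on that intersection.

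\textbf{Expected obstacle.} The main care point is the benign nonempty case. The bound must hold uniformly over an assessment set of random size, which is why the hypothesis is imposed on every assessed pair. The cap $\mu_b$ in Eq.~(\ref{eq:pair_aggregate}) must also be included explicitly, which is why $\underline\mu_b$ appears in $\ell_b$. Nothing beyond monotonicity and the min structure is needed.
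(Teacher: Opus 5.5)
Your proof is correct and follows the paper's argument. Monotonicity of $f$ bounds the attacker's assessed pair, and hence $R_a$, through the minimum aggregation, and it bounds every assessed benign pair; the bounds $m_b$ and $\underline\mu_b$ handle tier 1 and the initial-mean cap, the empty-assessment case gives $R_b=T_{1,b}\ge m_b$, and subtraction with a union bound finishes (your explicit derivative check of $f$'s monotonicity is a small addition the paper only asserts).
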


\begin{proof}
Monotonicity bounds the attacker's assessed pair, and hence $R_a$, above by $u_a$. It bounds every assessed benign pair below by the third argument of $\ell_b$. The other arguments bound tier 1 and the initial-mean cap. If no benign pair is assessed, $R_b=T_{1,b}\ge m_b$. Subtraction and a union bound prove the claims.
\end{proof}

The benign assumptions allow errors but bound their retained contradiction evidence jointly over assessed pairs. Only one attacker pair needs a sufficiently strong contradiction. These are assumptions on reporting histories, not consequences of labeling an agent benign or malicious. When $\ell_b>u_a$, the theorem gives strict separation. If pre-attack agent trusts are equal to $r_0$, then $(r_0-R_a)-(r_0-R_b)=R_b-R_a$, so this gap equals the difference in their trust decreases. Equal starting scores are needed only for this interpretation.

For attacker and benign peer within one attacked scene, the analogous identity is
\begin{equation}
 (R_b-R_a)-(T_{1,b}-T_{1,a})=P_a-P_b.
 \label{eq:gap_improvement}
\end{equation}
Thus $P_a\ge c_a>c_b\ge P_b$ improves their separation relative to MATE. Pure-clean runs and benign peers in attacked scenes are different populations, so assumptions on their reductions are stated separately.

\subsection{Detection with bounded additional false alarms}
\label{subsec:false_alarms}

A low attacker score is useful only if a threshold can detect it without excessive benign flags. In this comparison, $\mathrm{TPR}$, $\mathrm{FPR}$, and $\eta$ refer to \defensename, while the subscript $\mathrm M$ denotes MATE. Fix thresholds $0<\eta\le\eta_{\mathrm M}<1$ and a specified benign population. Using the tier-2 reduction $P_i$ defined in Appendix~\ref{subsec:paired} and suppressing the benign agent index, the FPR difference is
\begin{equation}
 \begin{aligned}
 \mathrm{FPR}-\mathrm{FPR}_{\mathrm M}
 &=\Pr_b(T_1\ge\eta_{\mathrm M},P>T_1-\eta)\\
 &\quad-\Pr_b(T_1<\eta_{\mathrm M},P\le T_1-\eta)\\
 &\le\Pr_b(T_1\ge\eta_{\mathrm M},P>\eta_{\mathrm M}-\eta).
 \end{aligned}
 \label{eq:relative_fpr}
\end{equation}
The first term counts newly introduced false alarms and the second counts removed false alarms. Dropping the latter and using $T_1\ge\eta_{\mathrm M}$ gives the inequality.

\begin{proposition}[A joint detection--false-alarm operating point]
\label{prop:threshold_interval}
Suppose $R_a\le u_a$ with probability at least $1-\xi_a$, and $\Pr_b(T_1\ge\eta_{\mathrm M},P>p_b)\le\delta$, where $u_a,\xi_a,\delta\in[0,1]$ and $p_b\ge0$. If $u_a<\eta_{\mathrm M}-p_b$, every fixed threshold
\begin{equation}
 u_a<\eta\le\eta_{\mathrm M}-p_b
 \label{eq:threshold_interval}
\end{equation}
satisfies $\mathrm{TPR}\ge1-\xi_a$ and $\mathrm{FPR}\le\mathrm{FPR}_{\mathrm M}+\delta$.
\end{proposition}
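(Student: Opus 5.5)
The proof splits into a detection half and a false-alarm half, both at a fixed threshold $\eta$ in the interval of Eq.~(\ref{eq:threshold_interval}). The two parts only meet through the requirement that this interval is nonempty, which is exactly the hypothesis $u_a<\eta_{\mathrm M}-p_b$.

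\emph{Detection.} On the event $\{R_a\le u_a\}$, which has probability at least $1-\xi_a$ by assumption, the strict inequality $u_a<\eta$ gives $R_a<\eta$. By Eq.~(\ref{eq:detection_objective}), the attacker is then flagged. Hence $\mathrm{TPR}=\Pr(R_a<\eta)\ge\Pr(R_a\le u_a)\ge 1-\xi_a$. In practice the bound $u_a$ would come from Theorem~\ref{thm:separation} or Proposition~\ref{prop:finite_detection}, but here it is simply assumed.

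\emph{False alarms.} I would start from the decomposition in Eq.~(\ref{eq:relative_fpr}). It already gives
\[
\mathrm{FPR}-\mathrm{FPR}_{\mathrm M}\le\Pr_b(T_1\ge\eta_{\mathrm M},P>\eta_{\mathrm M}-\eta).
\]
I would re-derive it briefly to confirm that it only uses $R_b=T_1-P$, $P\ge0$, and $\eta\le\eta_{\mathrm M}$. A newly introduced flag requires $T_1\ge\eta_{\mathrm M}$ and $R_b<\eta$, so $P=T_1-R_b>\eta_{\mathrm M}-\eta$. The upper end of the threshold interval, $\eta\le\eta_{\mathrm M}-p_b$, gives $\eta_{\mathrm M}-\eta\ge p_b$. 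So the event $\{P>\eta_{\mathrm M}-\eta\}$ is contained in $\{P>p_b\}$. Monotonicity of probability then bounds the right side by $\Pr_b(T_1\ge\eta_{\mathrm M},P>p_b)\le\delta$.

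The main obstacle is bookkeeping rather than analysis. First, the inequalities are mixed: strict in $R_a<\eta$ and $P>p_b$, non-strict in $\eta\le\eta_{\mathrm M}-p_b$. The containment of events must hold exactly, and it does, since $P>\eta_{\mathrm M}-\eta\ge p_b$ implies $P>p_b$. Second, the benign population and the probability measure $\Pr_b$ must be the same in the $\delta$ assumption and in both FPRs. Third, the requirement $\eta\le\eta_{\mathrm M}$ needed by Eq.~(\ref{eq:relative_fpr}) follows from $p_b\ge0$. Finally, $\eta>u_a\ge0$ ensures $\eta>0$, so the fixed-threshold setup of Appendix~\ref{subsec:false_alarms} applies.
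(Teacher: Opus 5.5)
Your proposal is correct and follows the paper's own two-line argument. The strict lower end $u_a<\eta$ turns the event $\{R_a\le u_a\}$ into a flag. The upper end gives $\eta_{\mathrm M}-\eta\ge p_b$, which places the event in Eq.~(\ref{eq:relative_fpr}) inside $\{T_1\ge\eta_{\mathrm M},P>p_b\}$, so its probability is at most $\delta$.
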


\begin{proof}
The lower threshold bound flags every outcome with $R_a\le u_a$. The upper bound gives $\eta_{\mathrm M}-\eta\ge p_b$, so Eq.~(\ref{eq:relative_fpr}) bounds additional FPR by $\delta$.
\end{proof}

The same threshold must satisfy both requirements, excluding the vacuous choice of a zero threshold. This establishes a useful detection--false-alarm operating point under the stated assumptions, not superiority to every baseline at every threshold. As a separate comparison, if $\Pr_a(T_{1,a}\ge m_a)\ge1-\zeta_a$ and $\eta\le m_a$, then $\mathrm{TPR}_{\mathrm M}(\eta)\le\zeta_a$, since $\{T_{1,a}<\eta\}\subseteq\{T_{1,a}<m_a\}$. This last observation concerns MATE at the same low threshold, not at its own $\eta_{\mathrm M}$.

\paragraph{A benign evidence condition.} The minimum over pairs requires control of the most adverse pair, rather than average error frequency. Let $\mathcal H=\{T_{1,b}\ge\eta_{\mathrm M},\mathcal A_b^{t_\star}\ne\varnothing\}$ have probability at most $\gamma_H\in[0,1]$. For this bound, relabel agent $b$'s pair states by their order of creation, with at most $K_b<\infty$ states per window. Choose $\eta<\mu_{\min}<1$ and define
\begin{equation}
 \mathcal F_{bj}=\left\{j\in\mathcal A_b^{t_\star},\quad
 \eta V_{bj}-(1-\eta)U_{bj}
 >\frac{\kappa}{\lambda}(\mu_{\min}-\eta)\right\}.
 \label{eq:benign_tail_condition}
\end{equation}
For $\varepsilon_0,\ldots,\varepsilon_{K_b}\in[0,1]$, assume, when $\Pr_b(\mathcal H)>0$, that
\begin{equation}
 \Pr_b(\mu_b<\mu_{\min}\mid\mathcal H)\le\varepsilon_0,\qquad
 \Pr_b(\mathcal F_{bj}\mid\mathcal H)\le\varepsilon_j,\quad j=1,\ldots,K_b.
 \label{eq:benign_tail_assumptions}
\end{equation}
For an uncreated pair, $\mathcal F_{bj}$ is empty. These assumptions allow occasional benign contradictions while bounding the chance that they accumulate enough retained evidence to cross the decision threshold. They imply
\begin{equation}
 \mathrm{FPR}-\mathrm{FPR}_{\mathrm M}
 \le\gamma_H\min\{1,\varepsilon_0+\textstyle\sum_{j=1}^{K_b}\varepsilon_j\}.
 \label{eq:benign_tail_bound}
\end{equation}
Indeed, a newly flagged benign agent must belong to $\mathcal H$. Unless its initial mean is below $\mu_{\min}$, crossing requires an assessed pair with $B_{bj}<\eta$. Equation~(\ref{eq:pair_crossing}) then implies $\mathcal F_{bj}$. A conditional union bound proves Eq.~(\ref{eq:benign_tail_bound}) without pair independence. If $\Pr_b(\mathcal H)=0$, no additional false alarm is introduced.

\end{document}